\newtheorem{theorem}{Theorem}
\newtheorem{lemma}{Lemma}
\newtheorem{remark}{Remark}
\begin{document}

\setcopyright{acmcopyright}

\CopyrightYear{2017}
\setcopyright{acmcopyright}
\conferenceinfo{ICDCN '18,}{January 04-07, 2018, Varanasi, India}
\isbn{978-1-4503-4839-3/17/01}\acmPrice{\$15.00}
\doi{http://dx.doi.org/10.1145/3007748.3007786}

\title{Evacuating Two Robots from Two Unknown Exits on the Perimeter of a Disk}

\numberofauthors{4}
\author{
\alignauthor
Debasish Pattanayak\\
       \affaddr{IIT Guwahati, India}\\
       \email{p.debasish@iitg.ernet.in}
\alignauthor
H. Ramesh\\
 \affaddr{IIT Guwahati, India}\\       			
       \email{ramesh\_h@iitg.ernet.in}
\alignauthor
Partha Sarathi Mandal\\
        \affaddr{IIT Guwahati, India}\\
       \email{psm@iitg.ernet.in}
\and  
\alignauthor 
Stefan Schmid\\
      \affaddr{Aalborg University, Denmark and TU Berlin, Germany}\\
       \email{schmiste@gmail.com}
}

\maketitle
\begin{abstract}
Distributed evacuation of mobile robots is a recent development. 
We consider the evacuation problem of two robots which are initially located at the center of a unit disk. Both the robots have to evacuate the disk through the exits situated on the perimeter of the disk at an unknown location. 
The distance between two exits along the perimeter $d$ is given.
We consider two different communication models. First, in the wireless model, the robots can send a message to each other over a long distance.
Second, in face-to-face communication model, the robots can exchange information with each other only when they touch each other. 
The objective of the evacuation problem is to design an algorithm which minimizes the evacuation time of both the robots.  
For the wireless communication model, we propose a generic algorithm for two robots moving to two points on the perimeter with an initial separation of $\zeta \leq d$. We also investigate evacuation problem for both unlabeled and labeled exits in the wireless communication model.
For the face-to-face communication model, we propose two different algorithms for $\zeta =0$ and $\zeta =d$ for unlabeled exits. We also propose a generic algorithm for $\zeta \leq d$ for labeled exits. We provide lower bounds corresponding to different $d$ values in the face-to-face communication model. 
We evaluate the performance our algorithms with simulation for both of the communication models. 
\end{abstract}

\begin{CCSXML}
<ccs2012>
 <concept>
  <concept_id>10003752.10003809</concept_id>
  <concept_desc>Theory of computation~Design and analysis of algorithms</concept_desc>
  <concept_significance>500</concept_significance>
 </concept>
 <concept>
  <concept_id>10003752.10003809.10010172</concept_id>
  <concept_desc>Theory of computation~Distributed algorithms</concept_desc>
  <concept_significance>500</concept_significance>
 </concept>
</ccs2012>
\end{CCSXML}
\ccsdesc[500]{Theory of computation~Design and analysis of algorithms}
\ccsdesc[500]{Theory of computation~Distributed algorithms}
\printccsdesc

\keywords{Mobile Robots, Evacuation, Distributed Algorithm}

\section{Introduction}
Searching is an inherent problem in computer science. Being an intriguing field with a long history, plethora of research had been conducted on search problems using a multitude of models. The models considered include probabilistic search model~\cite{stone1976theory}, cops and robbers model~\cite{bonato2011game}, search problem in group~\cite{ahlswede1987search}, classical pursuit and evasion~\cite{Chung2011,nahin2012chases} to name a few. In these papers, the goal is mainly to find an object located in a specific domain.

In this paper, the search problem requires the evacuation of robots from a disk. Unlike traditional search problems, we are trying to minimize the time needed for the \textit{last robot} to exit the disk. The search targets, i.e., the exits are located at arbitrary points on the perimeter not known a priori. The robots have no knowledge about the position of exits, but they have some information about the distribution of exits. 
The time for evacuation can be considered to be the competitive ratio of the algorithm, given the robots have a speed of 1 unit distance per unit time. Here the competitive ratio is the ratio of maximum distance traveled by any robot over all possible position of exits on the perimeter and the minimum distance between the starting position of the robots and location of exit.
The robots cooperate with each other to locate the exits. If there are two robots, then we need both the robots to exit the disk as soon as possible and the competitive ratio is the supremum of the maximum distance traveled by any of the two robots.

We study the evacuation from a disk for two robots and two exits. The robots start from the center of the disk and attempt to reach exits located at some point on the perimeter. The robots can move anywhere within the perimeter of the disk. 
Each robot has the knowledge of the distribution of exits over the perimeter of the unit disk. 
The robots can communicate with each other in a \textit{wireless} manner or when they come in contact, which is termed as \textit{face-to-face communication}.
We introduce the notion of \textit{labeled} exits, where the exits have identities and  relative ordering.
\subsection{Our Contribution} 
In this paper, we propose following evacuation algorithms for two different communication models where two robots evacuate via the exits situated on the perimeter of a unit disk starting from the center. We consider both unlabeled and labeled exits.
\begin{itemize}
\item In wireless communication model, we propose a generic algorithm for evacuation of robots moving to the perimeter with separation $\zeta \leq d$. 

\item In the face-to-face communication model, we use the meeting of the robots to gain more information about search space and achieve an agreement, while reducing the time required to evacuate and propose solutions to the problem for $\zeta = 0$ and $\zeta = d$.
\item We propose lower bounds for specific values of $d$ in the face-to-face communication model. 
\item We compare the algorithms with simulations in both communication models for both labeled and unlabeled exits which provides insights about how the worst-case evacuation time varies with $d$.
\end{itemize}
 
\subsection{Related Works}
 Czyzowicz et al.~\cite{CzyzowiczGGKMP14} introduced the search algorithm for robots in a disk. In \cite{CzyzowiczGGKMP14}, they proposed two different communication models namely, \textit{wireless} and \textit{face-to-face}. In the wireless model, robots can communicate at any point in time. So the robot that finds the exit can immediately communicate the location of exit and then the other robot can also exit the disk. They provided optimal worst-case evacuation time for two robots in wireless model, as $1+\frac{2\pi}{3}+\sqrt{3}\approx 4.826$. 
In the face-to-face communication model, \cite{CzyzowiczGGKMP14} provided upper bound of 5.740 and lower bound 5.199 for the single exit with two robots. After that Czyzowicz et al.~\cite{CzyzowiczGKNOV15}, again improved the lower bound and upper bound for two robots with one exit in a disk, by proposing linear and triangular detours at the worst case positions. They improved the upper bound to 5.628 and lower bound to $3+\frac{\pi}{6}+\sqrt{3}\approx 5.255$. 
In \cite{CzyzowiczDGKM16}, the search domain is restricted to only the perimeter of the circle. The robots have wireless communication. They provide upper and lower bounds for different distribution of exits.
In \cite{LamprouMS16}, the robots with different speeds are considered in wireless communication model. 
Borowiecki et al.~\cite{Borowiecki0DK16} have explored evacuation from graphs, where some of the vertices in a graph contain exits. The robots in the distributed model try to exit from the graph, where they change their strategy in an online fashion while they obtain more information as they move.
The problem considered in this paper can be viewed as a generalized version of the gathering problem \cite{FlocchiniPSW05} with multiple predefined gathering locations.

The rest of the paper is organized as follows. First, Section~\ref{sec:prelim} introduces notations and conventions to be used. 
In Section~\ref{sec:wlevac}, we propose a generic algorithm for evacuation in the wireless communication model, which works for any initial separation of robots less than $d$ for both unlabeled and labeled exits.
In Section~\ref{sec:f2f}, we propose two evacuation algorithms in the face-to-face communication model  for $\zeta = 0$ and  $\zeta = d$ for unlabeled exits. Also we propose a generic algorithm for labeled exits and provide lower bounds for different values of $d$. Further in Section~\ref{sec:simumation}, we provide simulation results for our algorithms and compare them. Finally we conclude in Section~\ref{sec:conclusion}.

\section{Preliminaries} \label{sec:prelim}
The disk considered is a unit disk.
The two robots are denoted as $R_1$ and $R_2$.
$d$ is the length of the smaller arc between two exits located on the perimeter.
The length of a chord corresponding to an arc of length $d$ is $ 2\sin (d/2)$.
$\widearc{AB}$ denotes the smaller arc over the perimeter joining two points $A$ and $B$. In general, $\widearc{AB}$ denotes the arc starting from $A$ and moving in the clockwise direction until $B$. So $\widearc{AB}$ and $\widearc{BA}$ combined make the circle. $\overline{AB}$ denotes the line segment joining points $A$ and $B$. We also use $\widearc{AB}$ and $\overline{AB}$ to denote the length of arc and line segment respectively. In this paper, we compute the evacuation time starting from the point on perimeter. So the total evacuation time is the computed evacuation time added by 1, which is the time required to reach the point on the perimeter from $O$. \vspace{2em}

\subsection{Model}
The identical point robots can move anywhere within the disk. 
The two robots travel with uniform speed, i.e., 1 unit distance per unit time.
The robots can have two different modes of communication, wireless or face-to-face. 
They have the ability to solve trigonometric equation.

\section{Evacuation in Wireless Communication Model}\label{sec:wlevac}
This section considers robots with wireless communication capability. The robots can send and receive messages in a wireless manner even if they are present at different positions at a particular time instance. 
In this section we present Algorithm~\ref{algo:evacuateWL} (\textsc{EvacuateWL}) for evacuation. The robots start moving from the center towards the boundary and start searching for the exits. The algorithm is evoked if a robot encounters an exit or receives a message. The message exchanged by the robots can just be a one bit message. Messages are reliable and  message propagation delay is ignored.

\subsection{Algorithm for Evacuation}
The two robots $R_1$ and $R_2$ start at the center of the circle $O$. They move to two points $B$ and $C$ on the perimeter. We consider $O$ as the origin and $\overrightarrow{OA}$ as the positive $x$-axis. 
Say $\widearc{BC} = \zeta$. 
 For simplicity, consider $A$ as the midpoint of $\widearc{BC}$. 
 So $\widearc{BA}=\widearc{AC}=\zeta/2$.
 Say, $R_1$ moves in the counter-clockwise direction and $R_2$ in the clockwise direction. Suppose two exits are located at $E_1$ and $E_2$.
Without loss of generality, let us consider that $R_1$ finds the exit $E_1$ at $X$ before $R_2$ unless both find the exits simultaneously.
Now, $R_1$ sends a message to $R_2$ that it found the exit. Given the two robots have the same velocity, $R_2$ can find out the location of the exit $E_1$ based on the position of $R_1$.
Say $\widearc{XB} = \widearc{CD} = x$. Since the robots know $d$, the distance over arc between the exits, they can predict two probable positions for the exit at $E_1'$ and $E_2'$.
Let $E_1'$ be the closest exit in clockwise direction and $E_2'$ is the closest exit in counter-clockwise direction from $X$.
There can be four different cases based on the value of $x$ and $\zeta$.
\begin{algorithm}
\caption{\textsc{EvacuateWL}}\label{algo:evacuateWL}
\SetKwInOut{Input}{Input}\SetKwInOut{Output}{Output}
\Input{Center and radius of the circle, and distance $d$ between the two exits}
\Output{Path of evacuation}
\If{encounters an exit}
	{Send a message to other robot\\
	Evacuate through that exit}
\If{encounters an exit and receives a message simultaneously}
	{Evacuate through that exit}
\If{receives a message}
	{Current position of the robot is $D$\\
	The position of other robot is $X$\\
	Determine probable exit positions $E_1'$ and $E_2'$\\
	Check if the locations $E_1'$ and $E_2'$ are explored or not\\
	\If{none of them are explored}
		{The path is
		$\min(\overline{DX},\overline{DE_1'} + \overline{E_1'E_2'},\overline{DE_2'} + \overline{E_1'E_2'})$}
	\If{one of them is explored}
		{Say $E_1'$ is already explored\\
		The path is $\min(\overline{DX},\overline{DE_2'})$}
	Move along the path to find an exit\\
	Evacuate through that exit
	}

\end{algorithm}

Algorithm~\ref{algo:evacuateWL} can be described as following.
\begin{description}
\item[\textbf{Case 1:}] Both $E_1'$ and $E_2'$ are unexplored\\
	In this case, $R_2$ can move towards the definite exit $X$ or it can go to the two probable exit positions $E_1'$ and $E_2'$.
	It chooses the minimum of the two.
	There can be three different situations in this case.
	\begin{itemize}
	\item $ 2x+\zeta \leq d$\\	
	The time for evacuation is $\widearc{CD} + \min(\overline{DX}, \overline{DE_1'} + \overline{E_1'E_2'})$, i.e., $ x + \min(2\sin(x + \zeta/2), 2\sin(d/2 - x - \zeta/2) + 2 \sin(\pi - d))$ as shown in Fig.~\ref{fig:wldiffc11}.

	\begin{figure}[H]
	\centering
	\includegraphics[height=0.3\linewidth]{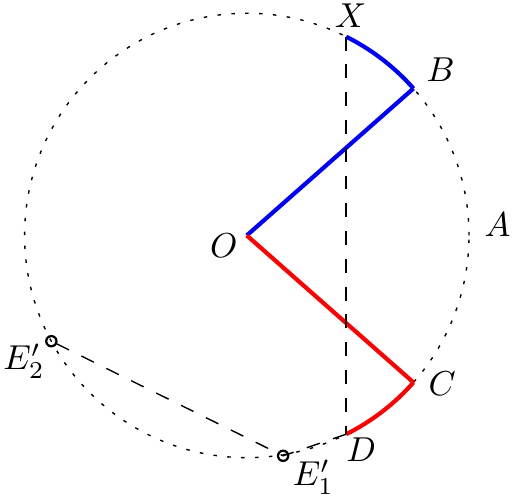}
	\caption{Both $E_1'$ and $E_2'$ are in $\widearc{DX}$}\label{fig:wldiffc11}
	\end{figure}\vspace{-2em}
	
	\item 	$ x \leq d < x + \zeta $ and $d + 2x + \zeta < 2\pi$\\
	The time for evacuation is $\widearc{CD} + \min( \overline{DX}, \overline{DE_1'} + \overline{E_1'E_2'},$ $ \overline{DE_2'} + \overline{E_1'E_2'})$, i.e., $x + \min(2\sin(x+\zeta/2), 2\sin(x + \zeta/2 - d/2) + 2\sin(\pi - d), 2\sin(x + \zeta/2 + d/2)+2\sin(\pi - d))$ as shown in Fig.~\ref{fig:wldiffc12}.
	
	\begin{figure}[H]
	\centering
	\includegraphics[height=0.3\linewidth]{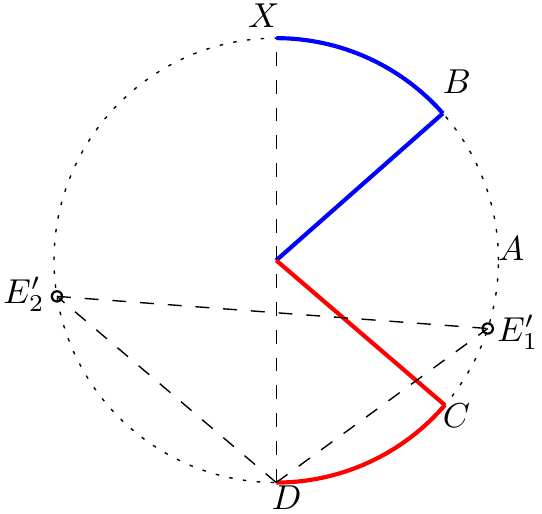}
	\caption{$E_1'$ is in $\widearc{DX}$ and $E_2'$ is in $\widearc{BC}$}\label{fig:wldiffc12}
	\end{figure}\vspace{-2em}
	\item 	$ x \leq d < x + \zeta $ and $d + x + \zeta > 2\pi$\\
	The time for evacuation is $\widearc{CD} + \min(\overline{DX}, \overline{DE_2'} + \overline{E_1'E_2'})$, i.e, $ x + \min(2\sin(x + \zeta/2), 2\sin(d/2 + x + \zeta/2 - \pi) + 2\sin(\pi - d)) $ as shown in Fig.~\ref{fig:wldiffc13}.
	\begin{figure}[H]
	\centering
	\includegraphics[height=0.3\linewidth]{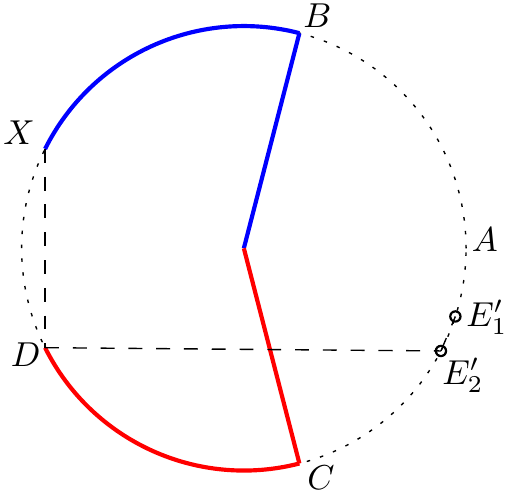}
	\caption{$E_1'$ and $E_2'$ are in $\widearc{BC}$}\label{fig:wldiffc13}
	\end{figure}\vspace{-2em}
	\end{itemize}
\item[\textbf{Case 2:}] $E_1'$ is unexplored, i.e., $x < d$ and $ x + \zeta +d < 2\pi <  2x + \zeta + d$\\
	In this case, as $E_2'$ is already explored, so there is definitely an exit at $E_1'$. So the time for evacuation is $\widearc{CD} + \min(\overline{DX}, \overline{DE_1'})$, i.e., $x + \min(2\sin(x + \zeta/2), 2\sin(x + \zeta/2 - d/2))$  as shown in Fig.~\ref{fig:wldiffc21}.
	\begin{figure}[H]
	\centering
	\includegraphics[height=0.3\linewidth]{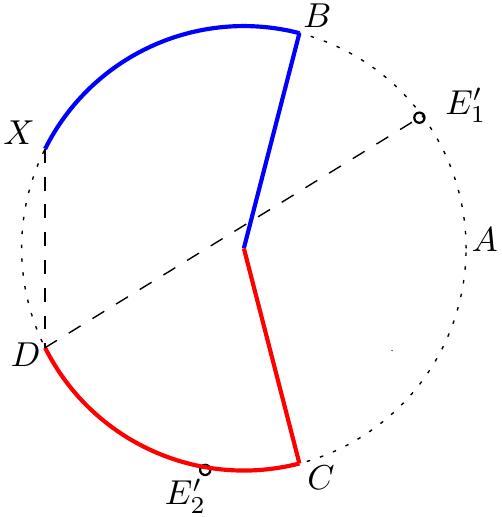}
	\caption{$E_1'$ is in $\widearc{BC}$ and $E_2'$ is in $\widearc{CD}$}\label{fig:wldiffc21}
	\end{figure}\vspace{-2em}
\item[\textbf{Case 3:}] $E_2'$ is unexplored\\
	In this case, as $E_1'$ is already explored, so there is definitely an exit at $E_2'$. So the time for evacuation is $\widearc{CD} + \min(\overline{DX}, \overline{DE_2'})$, i.e., $x + \min(2\sin(x + \zeta/2), 2\sin(x + \zeta/2 + d/2))$.
	 This case happens in the following two situations.
	\begin{itemize}
	\item $ x+\zeta \leq d < 2x + \zeta$ and $d + 2x + \zeta < 2\pi$\\ 
	$E_1'$ is on $\widearc{CD}$ and $E_2'$ lies on $\widearc{DX}$ as shown in Fig.~\ref{fig:wldiffc31}.
	\begin{figure}[H]
	\centering
	\includegraphics[height=0.3\linewidth]{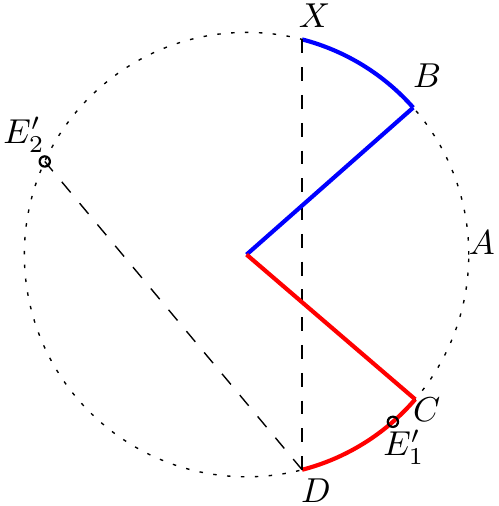}
	\caption{$E_1'$ is on $\widearc{CD}$ and $E_2'$ lies on $\widearc{DX}$}\label{fig:wldiffc31}
	\end{figure}\vspace{-2em}

	\item $ d < x$ and and $d + 2x + \zeta < 2\pi$\\
	$E_1'$ is on $\widearc{XB}$ and $E_2'$ lies on $\widearc{DX}$ as shown in Fig.~\ref{fig:wldiffc32}.
	\begin{figure}[H]
	\centering
	\includegraphics[height=0.3\linewidth]{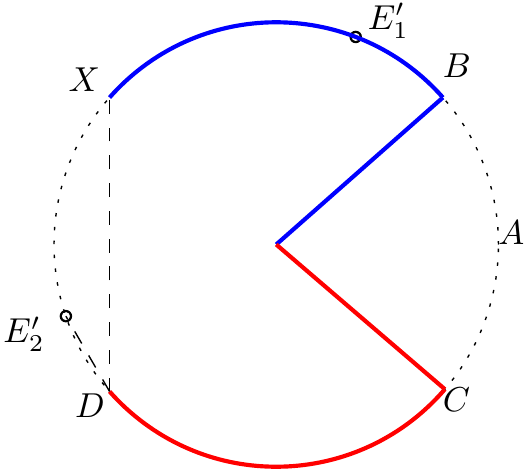}
	\caption{$E_1'$ is on $\widearc{XB}$ and $E_2'$ lies on $\widearc{DX}$}\label{fig:wldiffc32}
	\end{figure}\vspace{-2em}
	\end{itemize}
\end{description}

\begin{theorem}
If at least a robot encounters an exit, then an agreement for evacuation path in the wireless communication model is achieved when the other robot receives the message.
\end{theorem}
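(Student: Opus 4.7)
The plan is to show that as soon as the message carrying the one-bit notification ``exit found'' arrives at the second robot, both robots can independently compute an identical evacuation plan from common knowledge, and hence reach agreement without any further communication.

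First, I would establish that the geometric state at the moment of message delivery is common knowledge. Both robots depart from $O$ at the same instant, travel at unit speed, and deterministically place themselves at $B$ and $C$ with $\widearc{BC}=\zeta$; from then on they move along the perimeter in opposite directions at unit speed. Consequently, the arc length $x=\widearc{XB}=\widearc{CD}$ that each robot has traversed at time $t$ is the same. When $R_2$ receives the notification, it knows its own position $D$, its elapsed time on the arc, and therefore $R_1$'s position $X$ (symmetric about $A$). Conversely, $R_1$ knows $D$. Thus, after the message, both robots share the values of $\zeta$, $d$, $x$, $X$, $D$, and the arc $\widearc{CD}$ that $R_2$ has already explored.

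Second, I would argue that from $(x,\zeta,d)$ together with the knowledge that $d$ is the shorter inter-exit arc, the set of candidate positions for the second exit $E_2$ is exactly $\{E_1',E_2'\}$, defined as the two points on the perimeter at arc distance $d$ from $X$. Which of these candidates has already been eliminated depends entirely on whether they lie in the explored arc $\widearc{CD}$; this in turn is a deterministic function of $x$, $\zeta$, and $d$, and yields precisely the three cases (Case~1, Case~2, Case~3) enumerated in the algorithm. Since both robots can evaluate these inequalities, both select the same case.

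Third, within the selected case $R_2$ must pick a path to an exit, and this selection is the $\min$ of a finite number of explicitly computable distances (at most three: $\overline{DX}$, $\overline{DE_1'}+\overline{E_1'E_2'}$, $\overline{DE_2'}+\overline{E_1'E_2'}$, and their restrictions in Cases~2--3). Each candidate length is a closed-form function of $(x,\zeta,d)$, so both robots compute the same minimizer; ties can be broken by a fixed convention (e.g., prefer the path through $E_1'$). Meanwhile $R_1$'s path is fixed: it evacuates through $X$. Together this produces a pair of paths on which both robots agree the moment the message is received.

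The main obstacle is not the ``agreement'' per se but ensuring that the case distinction is exhaustive, mutually exclusive, and decidable from shared information. I would verify this by checking that the inequalities defining the three cases partition the parameter region $\{(x,\zeta,d):0\le x,\ 0\le\zeta\le d,\ 0<d\le\pi\}$, using the assumption that $d$ is the \emph{shorter} inter-exit arc (so at least one candidate $E_i'$ always falls outside the explored arc $\widearc{CD}$, guaranteeing at least one viable target). Once that partition is verified, the agreement follows immediately from determinism of the computation on shared inputs.
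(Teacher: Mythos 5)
Your proposal is correct and follows essentially the same route as the paper: once the message arrives, the receiving robot reconstructs the sender's position from symmetry of the motion, determines the definite exit $X$ and the two candidate positions $E_1',E_2'$, and deterministically selects the path prescribed by Algorithm~\ref{algo:evacuateWL}, so agreement is immediate; your version simply spells out the common-knowledge and case-exhaustiveness details that the paper leaves implicit. The only point the paper covers that you omit is the degenerate situation where both robots encounter exits simultaneously (each then evacuates through its own exit), but that case is trivial and handled directly by the algorithm.
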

\begin{proof}
If both the robots find the exits simultaneously, then they exchange messages and an agreement is achieved. Both the robots exit via their respective exits. 

If one robot finds the exit, then it sends a message to other robot. When the second robots receives the message, it can determine the location of one exit and two probable exit locations. 
Then it determines the path of evacuation according to Algorithm~\ref{algo:evacuateWL}. Hence an agreement is achieved.
\end{proof}

The cases mentioned above are applicable for cases where $0 \leq \zeta \leq d$.
If $ d < \zeta$, then the worst case arises in the situation where both the exits lie on the unexplored part of the perimeter as shown in Fig.~\ref{fig:wldiffc5}. So, the robots have to go back the unexplored part of the perimeter, when they do not find the exits. This increases the evacuation time compared to Algorithm~\ref{algo:evacuateWL} for $\zeta \leq d$. This case can be considered independent of the series of cases described before this, since the condition for this case is independent of $x$. In this special case the time of evacuation is always greater than $\widearc{DB} + \overline{DB}$, i.e., $\pi - \zeta/2 + 2\sin(\pi - \zeta/2)$. 
\begin{figure}[H]
\centering
\includegraphics[height=0.3\linewidth]{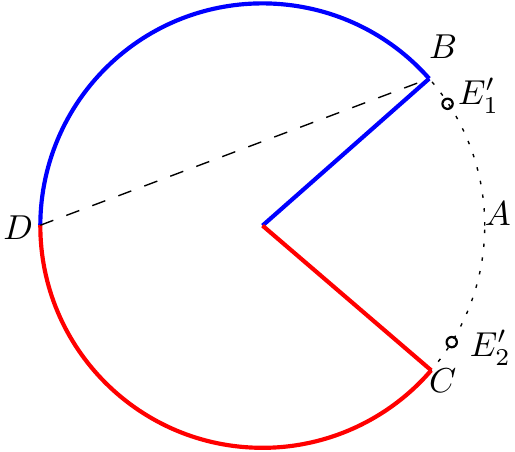}
\caption{$\zeta > d$}\label{fig:wldiffc5}
\end{figure}\vspace{-2em}

\begin{remark} It is easy to observe that if both the robots travel in the same direction (say clockwise) starting from different points, then the time for evacuation increases compared to Algorithm~\ref{algo:evacuateWL} in the worst case.
\end{remark}

\subsection{Labeled Exits}
Consider the exits are labeled. This means when a robot encounters an exit, it can identify whether the exit is $E_1$ or $E_2$. The robots have prior knowledge that the $E_2$ lies at a distance $d$ from $E_1$ in the counter-clockwise direction. 
This provides the robots with enough information to locate both the exits on the perimeter. If $R_1$ finds the exit $E_1$ first, then $R_2$ knows the exact locations of both exits and it can move towards the nearest exit.

There can be the following two situations if $E_1$ is located as $X$.
\begin{description}
\item[\textbf{Case 1:}] $E_2$ lies on $\widearc{DX}$\\
The time for evacuation is $\widearc{CD} + \min(\overline{DX}, \overline{DE_2})$, i.e., $x + \min(2\sin(x + \zeta/2), 2\sin(x + \zeta/2 + d/2))$ as shown in Fig.~\ref{fig:wllower1}.
	\begin{figure}[H]
	\centering
	\includegraphics[height=0.3\linewidth]{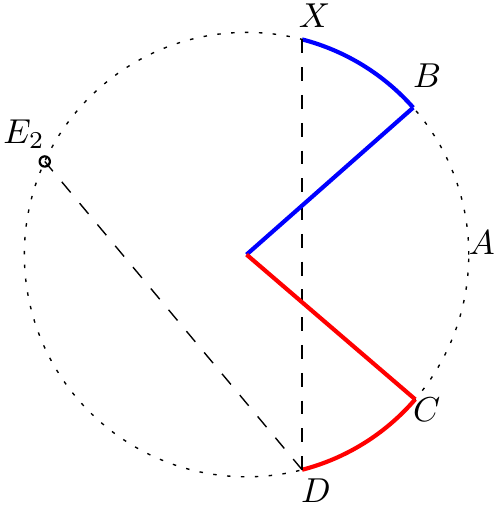}
	\caption{$E_2$ lies on $\widearc{DX}$}\label{fig:wllower1}
	\end{figure}\vspace{-2em}
\item[\textbf{Case 2:}] $E_2$ lies on $\widearc{BC}$\\
The time for evacuation is $\widearc{CD} + \min(\overline{DX}, \overline{DE_2})$, i.e., $x + \min(2\sin(x + \zeta/2), 2\sin(d/2 + x + \zeta/2 - \pi))$ as shown in Fig~\ref{fig:wllower2}.
	\begin{figure}[H]
	\centering
	\includegraphics[height=0.3\linewidth]{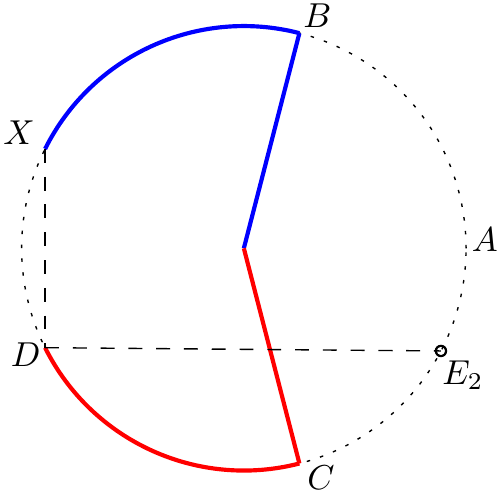}
	\caption{$E_2$ lies on $\widearc{BC}$}\label{fig:wllower2}
	\end{figure}\vspace{-2em}
\end{description}

Similarly, we can consider $E_2$ is encountered at $X$ by $R_1$. The analogous cases would be $E_1$ either lies on $\widearc{DX}$ or $\widearc{BC}$.

\section{Evacuation in Face-to-Face Communication Model}\label{sec:f2f}
This section presents algorithms for robot evacuation with face-to-face communication, i.e., the robots can exchange information only when they lie on the same point simultaneously. In this model, we consider only two specific cases for the value of $\zeta$. Those are $\zeta = 0$, i.e., both the robots moving to the same point on the perimeter and $\zeta = d$, i.e., both the robots moving to different points on the perimeter which are $d$ distance apart from each other.
\subsection{Same Point on Perimeter ($\zeta =0$)}\label{sec:samepoint}

In this section, we describe an evacuation algorithm where the robots move from the center of the disk to the same point on the perimeter. Initially, $R_1$ and $R_2$ move together from center of the disk, $O$, to an arbitrary point $A$ on the perimeter. Suppose two exits are located at $E_1$ and $E_2$. The robots $R_1$ and $R_2$ move in the different direction starting from $A$.
We consider $O$ as the origin and $\overrightarrow{OA}$ as the positive $x$-axis.

Suppose $R_1$ encounters exit $E_1$ at $X$, where $\widearc{XA}=x$.
$R_1$ computes two probable exit positions $E_1'$ and $E_2'$ such that $\widearc{E_2'X} =d$ and $\widearc{XE_1'} = d$ as shown in Fig.~\ref{fig:samepointcase1b}. 
 $R_1$ then computes a point $M$ such that $\widearc{XA}+\overline{XM}=\widearc{AM}$. Say $\widearc{AM}=y$. Consequently, $\overline{XM} = 2\sin((x+y)/2)$. Now, $y$ is the solution to the following equation,
\begin{equation}\label{eq:yvalue}
 x+ 2 \sin\left(\frac{x+y}{2}\right) = y
\end{equation}
Based on the relations of $x$, $y$ and $d$, we have the following four cases. 

\begin{description}
\item[Case 1: ] $x+y\leq d$\\
In this case,  $R_1$ catches $R_2$ before it can encounter closest probable exit $E_1'$.
As shown in Fig.~\ref{fig:samepointcase1b}, if both robots return to $X$ for exiting the disk, then the time for evacuation is $\widearc{AM}+\overline{XM} = 2y-x$.
Instead of returning to $X$, the robots can go to $E_1'$ and $E_2'$ since one of them contains an exit. If $E_1'$ is the other exit, then evacuation time is $y+\overline{ME_1'}$, otherwise the evacuation time is $y+\overline{ME_1'}+\overline{E_1'E_2'}$. So the time for evacuation in the worst case is $y + \min(y-x, \overline{ME_1'} + \overline{E_1'E_2'})$.

\begin{figure}[H]
\centering
\includegraphics[height=0.3\linewidth]{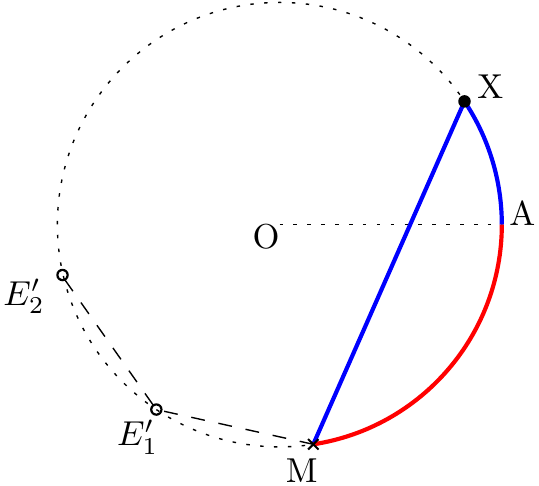}
\caption{$R_1$ catches $R_2$ before probable exit and goes to either the definite exit $X$ or the probable exits $E_1'$ and $E_2'$}\label{fig:samepointcase1b}
\end{figure}\vspace{-2em}

\item[Case 2:] $d < x + y $ and $x \leq d/2$\\
	\textit{Case 2a: }$R_1$ tries to catch $R_2$ on the circle if $y \leq \widearc{AE_2'}$. If it finds $R_2$ on the circle, then there is no exit present at $E_1'$. So the time for evacuation is $y +\min(\overline{MX}, \overline{ME_2'})$ if $R_1$ catches $R_2$ as shown in Fig.~\ref{fig:samepointcase2a}. If there is an exit present at $E_1'$, then $R_2$ would catch $R_1$ on $\overline{XM}$ according to \textit{Case 3a}.
	
\begin{figure}[H]
\centering
\includegraphics[height=0.25\linewidth]{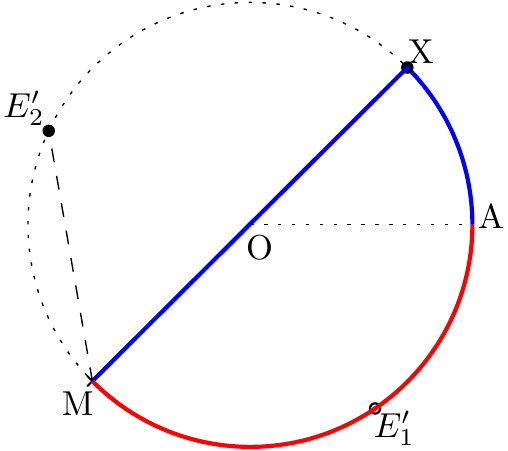}
\caption{$R_1$ tries to catch $R_2$ at $M$ and returns to closest exit.}\label{fig:samepointcase2a}
\end{figure}\vspace{-2em}
	\textit{Case 2b: } If $\widearc{AM} > \widearc{AE_2'}$, $R_1$ exits at $X$ as shown in Fig.~\ref{fig:samepointcase2b}. The time for evacuation in the worst case is $2\pi -d -x$.
\begin{figure}[H]
\centering
\includegraphics[height=0.3\linewidth]{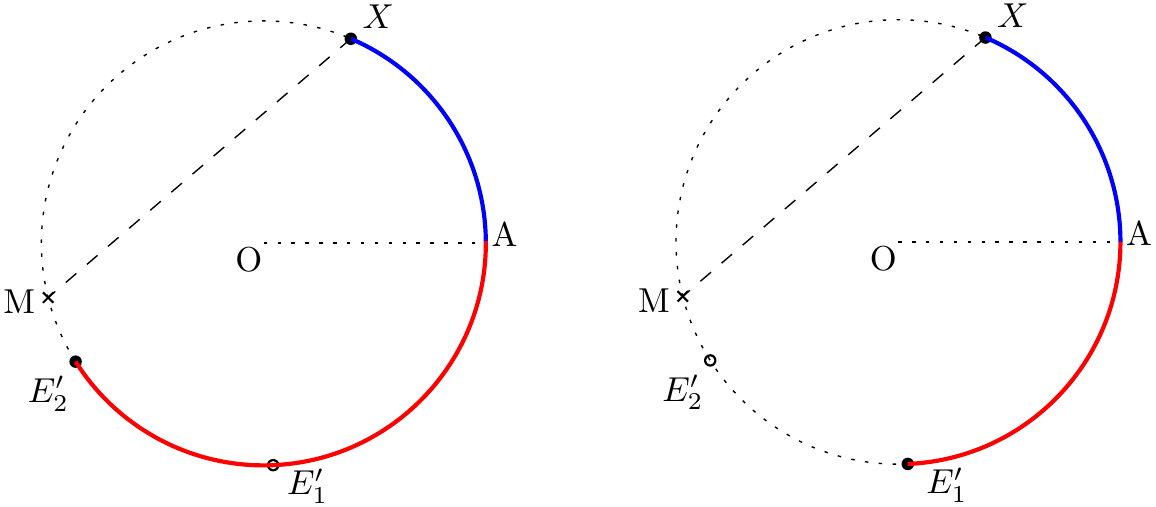}
\caption{$R_1$ exits at $X$ since $M$ is beyond $E_2'$.}\label{fig:samepointcase2b}
\end{figure}\vspace{-2em}	

\item[Case 3:] $ d/2 < x < d$\\
	In this case, when $R_1$ encounters an exit at $X$, $R_2$ may have already encountered another exit at $E_1'$. $R_1$ computes $E_3'$ such that $\widearc{E_1'E_3'} = d$ and $M'$ such that $\widearc{AE_1'} + \overline{E_1'M'} = \widearc{M'A}$. 
	\textit{Case 3a: } If $\widearc{M'A} < \widearc{E_3'A}$, then $R_1$ tries to catch $R_2$ on $\overline{E_1'M'}$, if $R_2$ is trying to catch $R_1$ on the circle (i.e., $R_2$ satisfies \textit{Case 2a}). In this case $R_1$ computes point $N$ on $\overline{E_1'M'}$ such that $\widearc{XA} + \overline{XN} = \widearc{AE_1'} + \overline{E_1'N}$. 
\begin{itemize}
	
\item If $R_1$ finds $R_2$ at $N$, then $R_1$ and $R_2$ have encountered exits at $X$ and $E_1'$ respectively. So the time for evacuation is $\widearc{XA} + \overline{XN} + \min(\overline{NX}, \overline{NE_1'})$ (ref. Fig.~\ref{fig:samepointcase3a}).

\begin{figure}[H]
\centering
\includegraphics[height=0.3\linewidth]{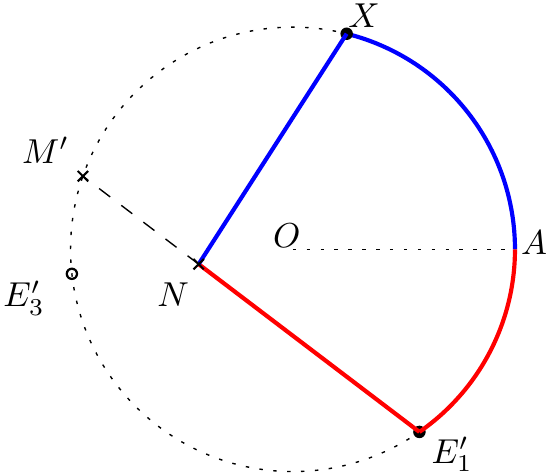}
\caption{$R_1$ tries to catch at $N$}\label{fig:samepointcase3a}
\end{figure}\vspace{-2em}	

\item If $R_1$ does not find $R_2$ at $N$ implies that $R_2$ did not encounter an exit at $E_1'$. So $R_1$ finds point $P$ such that $\widearc{AP} = \widearc{XA} + \overline{XN} + \overline{NP}$. 
\begin{itemize}

\item If $\widearc{AP} <\widearc{AE_2'}$, then $R_1$ catches $R_2$ at $P$. The time for evacuation is $\widearc{AP} + \min(\overline{PE_2'}, \overline{PX})$ as per Fig.~\ref{fig:samepointcase3ca}.

\begin{figure}[H]
\centering
\includegraphics[height=0.3\linewidth]{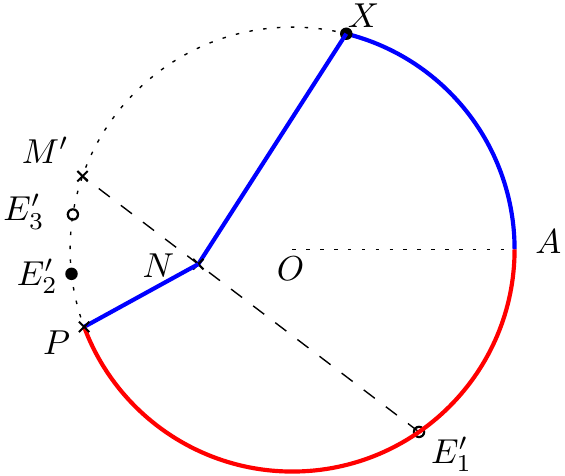}
\caption{$R_1$ catches $R_2$ at $P$ beofre $E_3'$.}\label{fig:samepointcase3ca}
\end{figure}\vspace{-1em}

\item If $\widearc{AP} \geq \widearc{AE_2'}$, $R_1$ goes to the closest exit from $N$ as shown in Fig.~\ref{fig:samepointcase3cb}. The time for evacuation is $\widearc{XA}+ \overline{XN} + \min(\overline{NX}, \overline{NE_2'})$
\end{itemize}

\begin{figure}[H]
\centering
\includegraphics[height=0.3\linewidth]{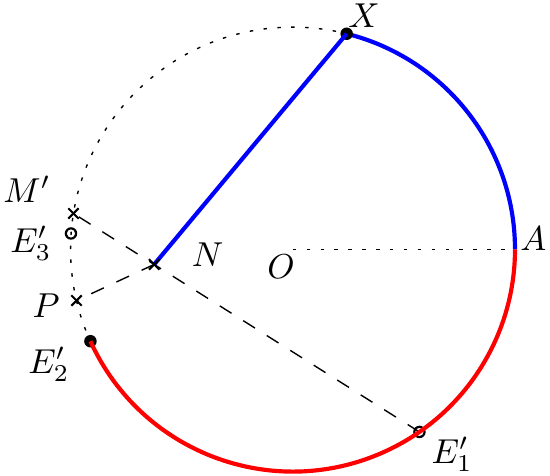}
\caption{$R_1$ moves to closest exit from $P$}\label{fig:samepointcase3cb}
\end{figure}\vspace{-2em}	

\end{itemize}
\textit{Case 3b: } If $\widearc{M'A} \geq \widearc{E_3'A}$, $R_1$ exits at $X$.  The time for evacuation in the worst case is $\max(x, 2\pi - x - d)$ as per Fig.~\ref{fig:samepointcase3b}.
\begin{figure}[H]
\centering
\includegraphics[height=0.3\linewidth]{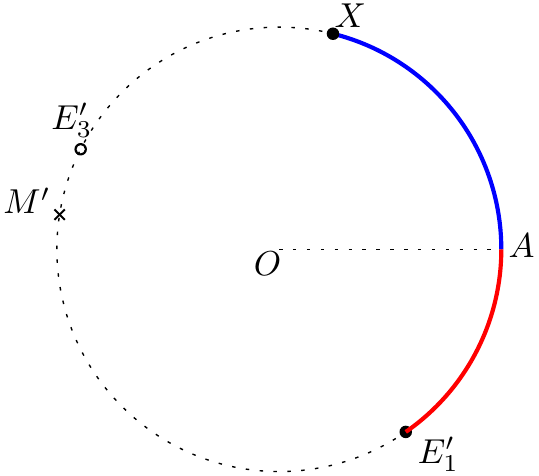}
\caption{$R_1$ exits at $X$ since $M'$ is beyond $E_3'$.}\label{fig:samepointcase3b}
\end{figure}\vspace{-2em}	

%

\item[Case 4:] $d\leq x$\\
When $R_1$ encounters the exit at $X$, it already knows that $E_2'$ is the other exit. $R_1$ computes $M$ such that $\widearc{AM} = \widearc{XA} +\overline{XM}$.

\textit{Case 4a: }If $\widearc{AM}<\widearc{AE_2'}$, then $R_1$ catches $R_2$ before $R_2$ encounters an exit as shown in Fig.~\ref{fig:samepointcase4a}. So the evacuation time is $\widearc{AM}+\min(\overline{ME_2'},\overline{MX})$.
\begin{figure}[H]
\centering
\includegraphics[height=0.25\linewidth]{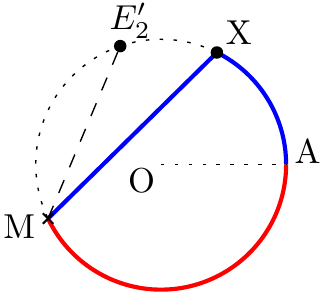}
\caption{$R_1$ catches $R_2$ at $N$ after missing at $M$}\label{fig:samepointcase4a}
\end{figure}\vspace{-1em}
\textit{Case 4b: }If $\widearc{AM}\geq \widearc{AE_2'}$, then $R_2$ reaches exit at $E_2'$ before $R_1$ can catch it as shown in Fig.~\ref{fig:samepointcase4b}. So $R_1$ and $R_2$ exit from $X$ and $E_2'$ respectively. So the time of evacuation is $\max(\widearc{XA},\widearc{AE_2'})$, i.e., $\max(x,2\pi-d-x)$.
\begin{figure}[H]
\centering
\includegraphics[height=0.25\linewidth]{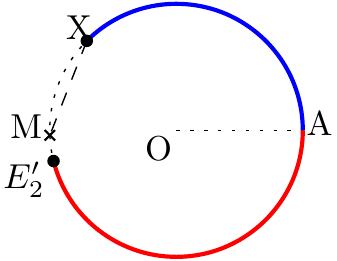}
\caption{$R_2$ reaches exit before $M$}\label{fig:samepointcase4b}
\end{figure}\vspace{-1em}
\textit{Case 4c: }If $2x + d \geq 2\pi$, then $R_2$ has already encountered exit by the time $R_1$ has encountered an exit. So $R_1$ exits at $X$. The time for evacuation is $x$ as per Fig.~\ref{fig:samepointcase4c}.
\begin{figure}[H]
\centering
\includegraphics[height=0.25\linewidth]{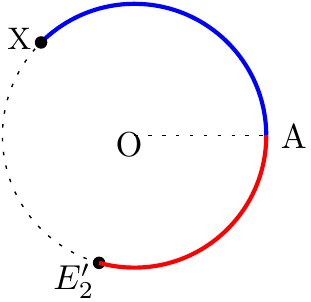}
\caption{$R_2$ reaches exit before $R_1$ reaches $X$}\label{fig:samepointcase4c}
\end{figure}\vspace{-2em}

\end{description}

\begin{theorem}\label{thm:f2fzeta0}
An agreement for evacuation is achieved for $\zeta = 0$ in the face-to-face communication model.
\end{theorem}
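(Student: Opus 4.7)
The plan is to prove the theorem by a case analysis that mirrors the four cases of the algorithm, showing that in every case the two robots end up with a consistent interpretation of the exit locations and of each other's chosen evacuation path. The argument rests on one key observation: because both robots share the same clock, the same speed, and a symmetric initial trajectory from $A$, either robot can reconstruct the other's current position from the elapsed time as long as no exit has been reported. Consequently, whenever a robot encounters an exit at a point $X$, it not only knows the position of the other robot, but can also compute the finite set of probable locations $\{E_1', E_2'\}$ (or a single location, when one side has already been swept) for the second exit. Agreement therefore reduces to showing that the subsequent movements of the two robots, viewed as a protocol, always lead to a single common understanding of which exit each takes.

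The first step is the base case (Case~1: $x+y\le d$). Here $R_1$'s extrapolated catch-point $M$ lies strictly inside the unexplored arc, so $R_2$ has not passed any probable exit position yet. Since the condition $\widearc{XA}+\overline{XM}=\widearc{AM}$ makes the two travel distances equal, $R_1$ and $R_2$ necessarily meet at $M$; at that meeting they exchange the information that only $X$ is confirmed, and the joint decision between returning to $X$ or racing to $\{E_1',E_2'\}$ is then deterministic from $x$, $y$ and $d$, hence the same for both robots. Case~4 is the symmetric opposite extreme ($d\le x$): the second exit is uniquely determined as $E_2'$, and the sub-cases 4a--4c are distinguished entirely by the geometric comparison of $\widearc{AM}$ with $\widearc{AE_2'}$ and of $2x+d$ with $2\pi$, quantities known to both robots; so each independently computes the same sub-case and hence the same target.

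The subtler steps are Cases~2 and~3, where a failed rendezvous itself carries information. In Case~2a, $R_1$ moves along the arc toward $M$; if it meets $R_2$ there, then $R_2$ has not found an exit at $E_1'$, so $E_2'$ is the other exit and both robots have the same remaining geometry to optimize. The companion situation is Case~3a for $R_2$: if $R_2$ did encounter an exit at $E_1'$, then by the construction of $N$ on $\overline{E_1'M'}$ the two equal-time conditions $\widearc{XA}+\overline{XN}=\widearc{AE_1'}+\overline{E_1'N}$ guarantee that $R_1$ and $R_2$ meet exactly at $N$; this rendezvous confirms exits at $X$ and $E_1'$, and again the choice of exit for each robot is a deterministic minimization on identical inputs. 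The main obstacle is to rule out inconsistent scenarios in Case~3: I will need to verify that Cases~2 and~3 are mutually exclusive in the sense that $R_1$'s Case~3a plan (catch $R_2$ at $N$) and $R_2$'s Case~2a plan (catch $R_1$ at $M$) are compatible---that is, the robot that actually finds an exit first is always the one whose role the other is assuming. This follows from the strict inequalities $x\le d/2$ versus $d/2<x<d$ that separate the two cases, since both robots know $x$ from the timing of the first meeting or from the message-free passage of time up to the discovery event.

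Finally, I will handle the degenerate branches (2b, 3b, 4b, 4c) uniformly: in each of them some $R_i$ is either at an exit or guaranteed that the other has already reached an exit, so it simply evacuates through its current exit, and the other robot reaches the symmetric exit along a path that both robots have independently computed. Collecting the four cases gives the desired agreement for every position of the exits, which establishes the theorem.
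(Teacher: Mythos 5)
Your proposal is correct and takes essentially the same approach as the paper's own proof: a case-by-case argument over the algorithm's four cases showing that the robots either meet at a designed equal-time rendezvous point ($M$, $N$, or $P$) or independently compute consistent exits, with the complementarity between one robot's Case~2a plan and the other's Case~3a plan serving as the key cross-check, exactly as in the paper. The only cosmetic difference is that you make the ``failed rendezvous carries information'' principle and the shared-clock position-reconstruction argument explicit up front, whereas the paper states them implicitly inside the individual cases.
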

\begin{proof}

There are broadly two different situations for two robots to have an agreement for evacuation. In most of the cases, the robots meet with each other, exchange information and evacuate the disk via the nearest exit. But in some other cases, the robots agree on their respective path of evacuation without meeting each other, since they can obtain the complete information about the behavior of the other robot at some intermediate point. The algorithm for $\zeta = 0$ in face-to-face communication model guarantees that the behavior of the two robots are consistent in all the situations, i.e., either both meet or both exit separately without meeting.
The cases  are as following.

\textbf{Case 1:} $x + y \leq d$, in this case, an agreement is achieved at $M$, when $R_1$ catches $R_2$. $R_2$  was following the circular path till $M$ before it got caught (ref. Fig.~\ref{fig:samepointcase1b}).

\textbf{Case 2:} $ d< x + y$ and $x \leq d/2$, in this case, an agreement can be achieved at $M$, if $R_1$ catches $R_2$ (ref. Fig.~\ref{fig:samepointcase2a}). Here $R_2$ follows Case 3. So $R_1$ can get caught on $\overline{XM}$. If $R_1$ gets caught before it reaches $M$, then it obtains the locations of both exit and an agreement is achieved (ref. Fig.~\ref{fig:samepointcase3a}). 

If $x + y > 2\pi - d$, then $R_1$ exits at $X$, since $R_1$ realizes that it cannot catch $R_2$ before it encounters an exit. If $R_2$ encounters an exit, $R_2$ also evacuates at that exit (ref. Fig~\ref{fig:samepointcase3b}).

\textbf{Case 3:} $d/2 < x < d$, in this case, an agreement is achieved at either $N$ (ref. Fig~\ref{fig:samepointcase3a}) or $X$ (ref. Fig.~\ref{fig:samepointcase3b}).
For an agreement at $N$, the robot $R_2$ was following Case 2 (ref. Fig.~\ref{fig:samepointcase2a}) at the time of meeting. 
$R_1$ can always predict the exact path of $R_2$ once it has the complete information regarding exits. In other words, complete information leads to an agreement. Here $R_1$ can obtain complete information at $N$, i.e., it knows the location of other exit and the behavior of $R_2$ (ref. Fig.~\ref{fig:samepointcase4b}). Then both $R_1$ and $R_2$ can achieve an agreement together at $P$ (ref. Fig.~\ref{fig:samepointcase3ca}) or separately at $N$ and $E_2'$ (ref. Fig.~\ref{fig:samepointcase3cb}).

$R_1$ can also achieve an agreement at $X$, if $R_2$ has already exited at $E_1'$ according to Case 2 (ref. Fig.~\ref{fig:samepointcase2b}). If the other exit is at $E_2'$, then $R_2$ follows either Case 3 (ref. Fig.~\ref{fig:samepointcase3b}) or Case 4 (ref. Fig.~\ref{fig:samepointcase4b}) and exits at $E_2'$.

\textbf{Case 4:} $ d \leq x$, in this case, $R_1$ already knows the position of other exit and can predict the behavior of $R_2$. So it obtains the information at $X$, and achieves an agreement at $X$ (ref. Fig.~\ref{fig:samepointcase4b} and \ref{fig:samepointcase4c}) or at $M$ (ref. Fig.~\ref{fig:samepointcase4a}). Here $R_2$ can follow Case 3 (ref. Fig.~\ref{fig:samepointcase3b}) or Case 4 (ref. Fig.~\ref{fig:samepointcase4b}).
\end{proof}

\subsection{Different Points on Perimeter ($\zeta = d$)}\label{sec:diferentpoint}
Consider $R_1$ and $R_2$ start from points $B$ and $C$ such that $\widearc{BC}=d$. 
They start moving in opposite directions. If the initial separation is more than $d$, then there is a chance that both the exit lie in that arc and remain unexplored. In this section, we can reduce the search space by $d$, since at least one exit exists in the remaining perimeter.
Similar to Section~\ref{sec:samepoint}, we consider $O$ as the origin and $\overrightarrow{OA}$ as the positive $x$ axis. Taking $A$ as the midpoint of $\widearc{BC}$ gives us $\widearc{BA}=\widearc{AC}=d/2$.

Without loss of generality, consider $R_1$ finds exit $E_1$ at $X$, where $\widearc{XB}=x$. $R_1$ computes the probable exit position $E_2'$ at a distance $d$ over the perimeter from $X$, such that $\widearc{E_2'X}=d$. 
There are two cases with respect to the value of $x$ and $d$.
\begin{description}
\item[Case 1:] $x < d$\\

$R_1$ computes the position of $M$ such that $\widearc{XB}+\overline{XM}=\widearc{CM}$. Say $\widearc{CM}=y$. Now, $y$ is a solution to the following equation
\begin{equation}
y=x+2\sin\left(\cfrac{x+y+d}{2}\right)
\end{equation}

\textit{Case 1a:}  If $E_2'$ is not located in $\widearc{CM}$, then the robots have to return to the known exit at $X$. In this case, clearly $R_2$ remains on the path, since it did not encounter any exit before $y$, i.e., $\widearc{CM}$ does not have any exit. So  the exit time is $\widearc{CM}+\overline{MX}$, i.e., $ 2y - x$  as shown in Fig.~\ref{fig:diffpointcase1a}.
\begin{figure}[H]
\centering
\includegraphics[height=0.3\linewidth]{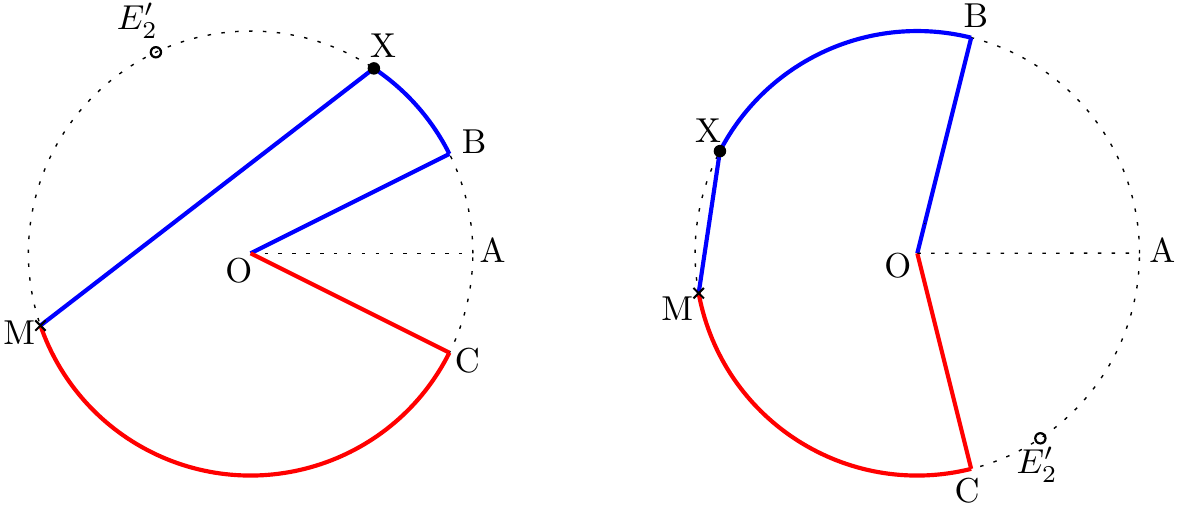}
\caption{$E_2'$ is not in $\widearc{CM}$}\label{fig:diffpointcase1a}
\end{figure}\vspace{-2em}

\textit{Case 1b:} If $E_2'$ lies in $\widearc{CM}$, then $R_1$ moves along $\overline{XE_2'}$ until $N$ such that $x+\overline{XN} = \widearc{CE_2'}+\overline{E_2'N}$ as shown in Fig.~\ref{fig:diffpointcase1b}. If $E_2'$ is the other exit, then the exit time is $x+\overline{XN}+\min(\overline{XN},\overline{E_2'N})$.
\begin{figure}[H]
\centering
\includegraphics[height=0.3\linewidth]{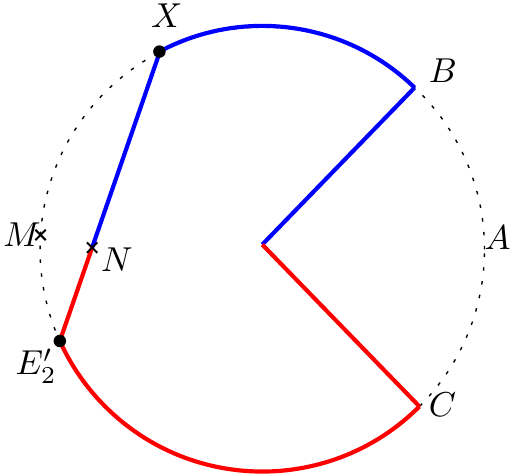}
\caption{$E_2'$ lies in $\widearc{CM}$}\label{fig:diffpointcase1b}
\end{figure}\vspace{-2em}

\textit{Case 1c: }If $E_2'$ is not an exit, then $R_1$ will not encounter $R_2$ at $N$, then it calculates point $P$ such that  $\widearc{CP} = x+\overline{XN}+\overline{NP}$. Then the total time required is $x+\overline{XN}+\overline{NP}+\min(\overline{PX},\overline{PE_1'})$ as shown in Fig.~\ref{fig:diffpointcase1c}.
\begin{figure}[H]
\centering
\includegraphics[height=0.3\linewidth]{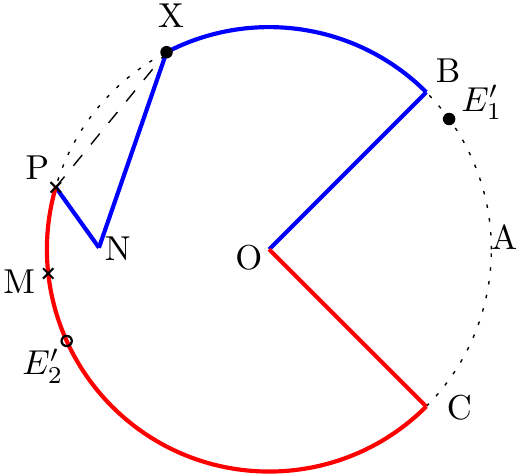}
\caption{$E_2'$ lies in $\widearc{CM}$ and it is not an exit}\label{fig:diffpointcase1c}
\end{figure}\vspace{-2em}

\item[Case 2:] $d \leq x$\\
 $R_1$ knows the probable exit position at $E_2'$ is a real exit.

\textit{Case 2a: } If $\widearc{CE_2'}<d$ as shown in Fig.~\ref{fig:diffpointcase2a}, then $R_1$ moves along $\overline{XE_2'}$ and encounters $R_2$ at $N$. The time for evacuation is $\widearc{XB}+2\overline{XN}$. 
\begin{figure}[H]
\centering
\includegraphics[height=0.3\linewidth]{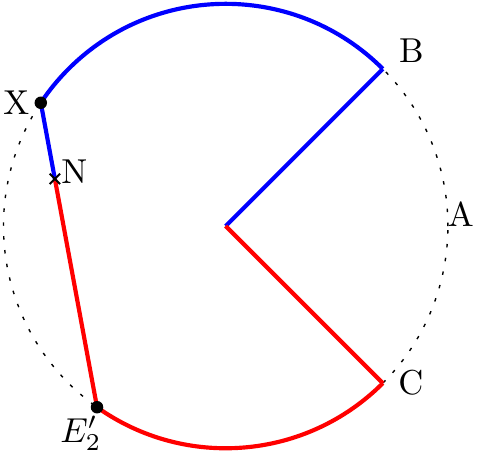}
\caption{$\widearc{CE_2'}<d$}\label{fig:diffpointcase2a}
\end{figure}\vspace{-2em}
\textit{Case 2b: } If $\widearc{CE_2'} \geq d$, then $R_2$ also knows that $X$ is a real exit, so both $R_1$ and $R_2$ will exit via $X$ and $E_2'$ respectively as shown in Fig.~\ref{fig:diffpointcase2b}. Total time required is $\min(x,2\pi -x-2d)$. 
\begin{figure}[H]
\centering
\includegraphics[height=0.3\linewidth]{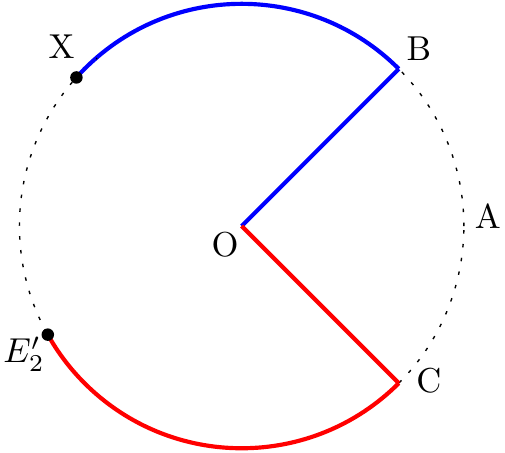}
\caption{$\widearc{BX}>d$ and $\widearc{CE_2'}>d$}\label{fig:diffpointcase2b}
\end{figure}\vspace{-2em}
\textit{Case 2c: } $R_1$ catches $R_2$ before it encounters $E_2'$. Then both robots go to the nearest exit. The time for evacuation is $\widearc{CM} + \min(\overline{MX}, \overline{ME_2'})$ as per Fig.~\ref{fig:diffpointcase2c}.
\begin{figure}[H]
\centering
\includegraphics[height=0.3\linewidth]{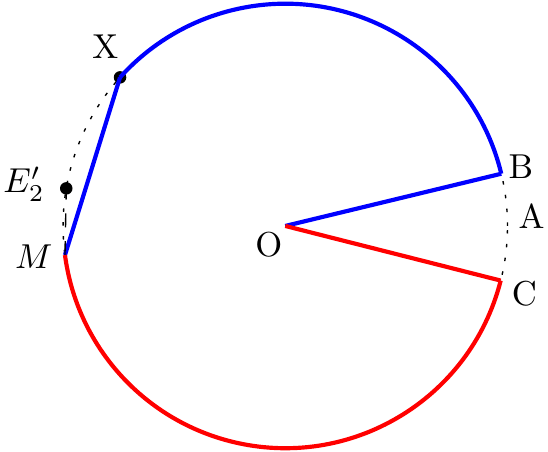}
\caption{$\widearc{CE_2'} > \widearc{CM}$}\label{fig:diffpointcase2c}
\end{figure}\vspace{-2em}
\end{description}

\begin{theorem}
An agreement for evacuation is achieved for $\zeta =d$ in the face-to-face communication model.
\end{theorem}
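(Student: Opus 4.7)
The plan is to mirror the structure of the proof of Theorem~\ref{thm:f2fzeta0}: enumerate the cases of Section~\ref{sec:diferentpoint} and show in each one that either the two robots physically meet (and therefore trivially agree, since face-to-face contact lets them exchange their observations) or else each robot has already accumulated enough information by itself to deterministically predict the other's continuation. The underlying invariant I will keep pointing to is that at every decision point the two robots share the same view of the disk up to symmetric relabeling of who found what, because the algorithm's branching conditions depend only on quantities ($x$, $d$, $\zeta=d$, the identity of the arc $\widearc{CM}$, etc.) that both robots can recompute from the absence or presence of a meeting by a specific deadline.

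First I would dispatch Case 1a (i.e.\ $x<d$ and $E_2'\notin\widearc{CM}$): $R_1$ reaches $M$ without meeting $R_2$, which by the geometry of $M$ can only happen if $R_2$ has not encountered the second exit on $\widearc{CM}$; this certifies to $R_1$ that the second exit must lie in the complementary arc, so returning along $\overline{MX}$ is consistent with $R_2$'s circular traversal and the two meet at $X$ (Fig.~\ref{fig:diffpointcase1a}). Case 1b is the easiest: $R_1$ leaves the circle at $X$ and meets $R_2$ at $N$ on $\overline{XE_2'}$ because $N$ was defined exactly to equalize the two path lengths, after which agreement is immediate by the face-to-face exchange (Fig.~\ref{fig:diffpointcase1b}). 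Case 1c is the subtlest sub-case and the one I expect to require the most care: when $R_1$ fails to find $R_2$ at $N$, it can conclude that $E_2'$ was a phantom exit; $R_1$ then aims for $P$, and I need to argue that $R_2$'s strategy, symmetrically, brings it either to $P$ or back to $X$ or to $E_1'$ so the two plans are consistent (Fig.~\ref{fig:diffpointcase1c}).

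For Case 2 I would argue as follows. In Case 2a, $\widearc{CE_2'}<d$ means that by the time $R_1$ has walked past $B$ toward $X$ a distance exceeding $d$, $R_2$ is still on the circle between $C$ and $E_2'$; hence $R_1$ and $R_2$ are guaranteed to meet on the chord $\overline{XE_2'}$ at the uniquely defined point $N$ (Fig.~\ref{fig:diffpointcase2a}), and agreement follows from contact. In Case 2b both robots have independently discovered that $x\geq d$ (equivalently $\widearc{CE_2'}\geq d$), and since $\zeta=d$ is common knowledge, each robot can unilaterally deduce the exact location of the other exit and exit via its nearest one (Fig.~\ref{fig:diffpointcase2b}); consistency of the two deductions is what gives the agreement without a meeting. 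Case 2c is a variant of Case 2a where $\widearc{CE_2'}>\widearc{CM}$, so $R_1$ catches $R_2$ at $M$ before $R_2$ reaches $E_2'$; face-to-face exchange then lets both choose the nearest exit (Fig.~\ref{fig:diffpointcase2c}).

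The hard part, as already flagged, will be articulating Case 1c cleanly: I must verify that the absence of a meeting at $N$ is observable by $R_1$ exactly when the corresponding hypothesis is observable by $R_2$, so that both robots switch branches simultaneously and the two computed trajectories (continuing to $P$ for $R_1$, and the symmetric continuation for $R_2$) intersect or coincide at an exit. I would also close with a single sentence noting that the symmetric situation in which $R_2$, rather than $R_1$, first encounters an exit is covered by renaming, since the algorithm is symmetric in the two robots; this avoids a duplicated case analysis.
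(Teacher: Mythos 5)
Your overall strategy matches the paper's: walk through the sub-cases of Section~\ref{sec:diferentpoint} and show that in each one the robots either physically meet (at $M$, $N$, or $P$) or both independently hold complete information about the exit locations and exit separately but consistently; the paper merely organizes this under the two headings $x\leq d$ and $x>d$ instead of your sub-case labels. However, your treatment of Case~1a is inverted. The point $M$ is defined by $\widearc{XB}+\overline{XM}=\widearc{CM}$ precisely so that $R_1$, cutting across the chord, arrives at $M$ at the same instant $R_2$ arrives there along the arc, and the hypothesis of Case~1a ($E_2'\notin\widearc{CM}$) guarantees that no candidate exit lies on $R_2$'s path up to $M$ (the clockwise candidate sits in the skipped arc $\widearc{BC}$). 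Hence $R_2$ is certainly still on the circle and the two robots \emph{do} meet at $M$; agreement is reached there by face-to-face exchange, after which they travel together along $\overline{MX}$ to the known exit (this is why the paper's evacuation time is $\widearc{CM}+\overline{MX}=2y-x$). Your version --- $R_1$ reaches $M$ \emph{without} meeting $R_2$, concludes from that absence that $R_2$ found nothing, and the two later meet at $X$ --- has the implication backwards (failing to find $R_2$ at $M$ would mean $R_2$ \emph{had} left the circle) and does not describe what the algorithm does; as written this step fails.

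In addition, the sub-case you yourself identify as the crux, Case~1c, is left as an obligation rather than discharged. The missing argument is short: if $R_2$ is not at $N$ at the appointed time, then, since a robot that had found an exit at $E_2'$ would deterministically be at $N$ then, there is no exit at $E_2'$ and $R_2$ has never left the circular arc; the point $P$ is defined by $\widearc{CP}=x+\overline{XN}+\overline{NP}$ exactly so that $R_1$ intercepts this still-circling $R_2$ at $P$, where they meet and agree. Your list of possible continuations for $R_2$ (``to $P$ or back to $X$ or to $E_1'$'') introduces alternatives that do not arise. With Case~1a corrected and Case~1c completed along these lines, your proof coincides with the paper's.
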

\begin{proof}
Similar to the proof of Theorem~\ref{thm:f2fzeta0}, the algorithm for $\zeta =d$ in the face-to-face communication model, guarantees that the behavior of the two robots are consistent in all the situations, i.e., either both meet or both exit separately without meeting.
The cases are as following.

\textbf{Case 1:} $ x \leq d$, in this case, $R_1$ behaves based on the location of $E_2'$, where $\widearc{E_2'X} = d$. Clearly, $E_2'$ is the exit in the counter-clockwise direction from $X$. In this case, the exit  in the clockwise direction from $X$ always lies on $\widearc{BC}$. 
If $R_1$ can catch $R_2$ before it reaches $E_2'$, then both robots achieve an agreement at $M$ (ref. Fig.~\ref{fig:diffpointcase1a}). 
Otherwise, $R_1$ tries to meet on the line $XE_2'$. 
If they meet at $N$, then they achieve an agreement (ref. Fig.~\ref{fig:diffpointcase1b} and \ref{fig:diffpointcase2a}). If $R_2$ is not present at $N$, then there is not an exit at $E_2'$. Thence they achieve an agreement at $P$, where $R_1$ catches $R_2$ (ref. Fig.~\ref{fig:diffpointcase1c}).

\textbf{Case 2:} $ x > d$, in this case, $R_1$ knows the position of both exits. If $R_2$ finds an exit and falls in Case 1, then for $R_2$ to achieve an agreement, it needs to know the situation of $R_1$. Hence $R_1$ has to meet $R_2$ at $N$ to achieve  an agreement (ref. Fig.~\ref{fig:diffpointcase1b} and \ref{fig:diffpointcase2a}). If $R_2$ also falls in Case 2, an agreement can be achieved without meeting, since both robots have complete information of the exit locations (ref. Fig.~\ref{fig:diffpointcase2b}). Otherwise, $R_1$ catches $R_2$ before $R_2$ encounters an exit (ref. Fig.~\ref{fig:diffpointcase2c}), and achieve an agreement at $M$.
\end{proof}

\subsection{Labeled Exits}\label{sec:f2flabel}
In this section, we propose a generic algorithm for labeled exits in face-to-face communication model for any $\zeta \leq d$. As the exits are labeled, once a robot encounters an exit, it can determine the location of other exit. Then it checks whether the other robot has encountered an exit or not. Without loss of generality, say $R_1$ finds exit $E_1$ at $X$. It tries to catch $R_2$, if and only if it can catch $R_2$ before $R_2$ encounters an exit. Otherwise, $R_1$ exits at $X$. Since we have considered $R_1$ cannot catch $R_2$ even when it encounters the exit first, this implies that $R_2$ also cannot catch $R_1$ before it encounters an exit. So $R_2$ also exits at its respective exit. This guarantees an agreement for labeled exits. 
Let $\widearc{XB} = x$ and $\widearc{BC} = \zeta$. Let $X'$ be the location of $R_2$ when $R_1$ is at $X$, i.e., $\widearc{CX'} =x$.
There can be four different situations based on the location of $E_1'$ and $E_2'$, where $\widearc{XE_1'} = \widearc{E_2'X} = d$. $R_1$ calculates $M$ where $\widearc{XB} + \overline{XM} = \widearc{CM} = y$. $y$ is the solution to the following equation

\begin{equation}
y = x + 2\sin\left(\frac{x+ y + \zeta}{2}\right)
\end{equation}
 The following cases describe the situations only when $R_1$ encounters the exit before $R_2$. If $R_2$ encounters an exit before $R_1$, then we can consider $R_2$ as $R_1$ and vice versa.
\begin{description}
\item[Case 1:] $E_1'$ does not lie on $\widearc{CM}$, i.e., $x + \zeta + y \leq d$ or $x < d< x + \zeta$. The time for evacuation is $ y + \min(\overline{MX}, \overline{ME_1'})$ as shown in Fig.~\ref{fig:f2flabc1}.
\begin{figure}[H]
\centering
\includegraphics[height=0.3\linewidth]{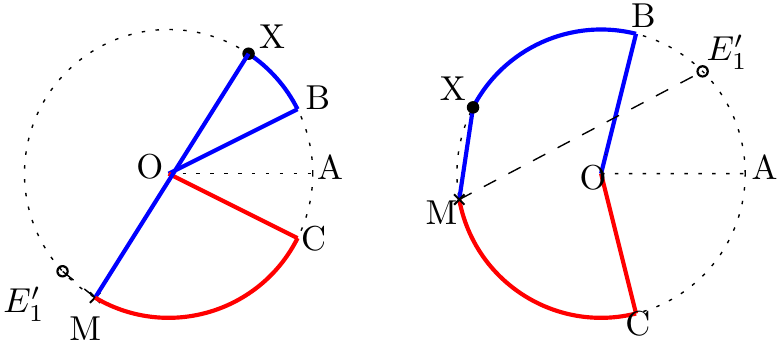}
\caption{$E_1'$ is not in $\widearc{CM}$}\label{fig:f2flabc1}
\end{figure}\vspace{-2em}
\item[Case 2:] $E_1'$ lies on $\widearc{X'M}$, i.e., $2x+\zeta \leq d< x + \zeta+y$. The time for evacuation is $ d- \zeta -x$ as shown in Fig.~\ref{fig:f2flabc2}.
\begin{figure}[H]
\centering
\includegraphics[height=0.3\linewidth]{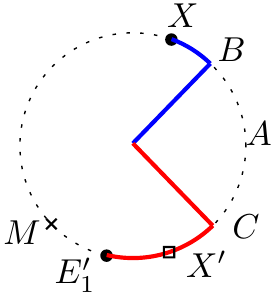}
\caption{$E_1'$ lies on $\widearc{X'M}$}\label{fig:f2flabc2}
\end{figure}\vspace{-2em}
\item[Case 3:] $E_2'$ does not lie on $\widearc{CM}$, i.e., $x + \zeta +y + d< 2\pi $ or $x + d< 2\pi< x+\zeta + d$. The time for evacuation is $ y + \min(\overline{MX}, \overline{ME_2'})$ as shown in Fig.~\ref{fig:f2flabc3}.
\begin{figure}[H]
\centering
\includegraphics[height=0.3\linewidth]{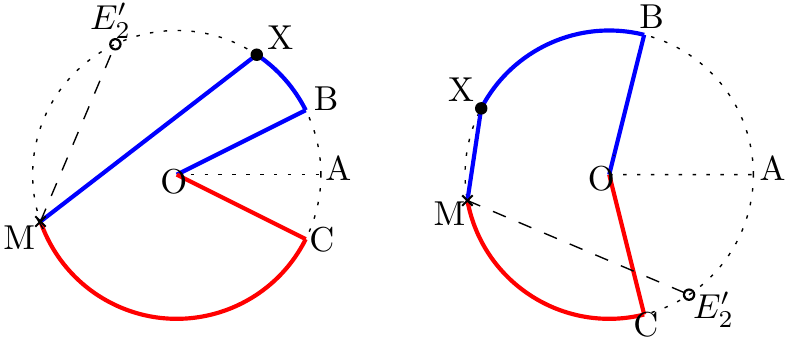}
\caption{$E_2'$ is not in $\widearc{CM}$}\label{fig:f2flabc3}
\end{figure}\vspace{-2em}
\item[Case 4:] $E_2'$ lies on $\widearc{X'M}$, i.e., $2x + \zeta + d< 2\pi \leq x+y+\zeta+d$. The time for evacuation is $2\pi -d-\zeta-x$ as shown in Fig.~\ref{fig:f2flabc4}.
\begin{figure}[H]
\centering
\includegraphics[height=0.3\linewidth]{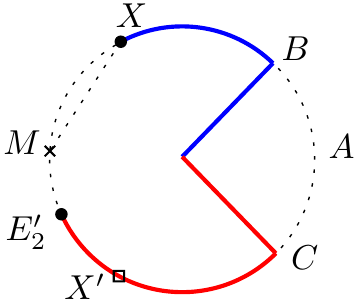}
\caption{$E_2'$ lies on $\widearc{X'M}$}\label{fig:f2flabc4}
\end{figure}\vspace{-2em}
\end{description}

\subsection{Lower Bounds}\label{sec:lowerbound}
In this section, we propose lower bounds for any deterministic algorithm $\mathcal{D}$ for two exits. The general structure of the lower bound follows a common pattern. We consider a polygon where each side of the polygon is of size $v = 2\sin(d/2)$.
Since the distance between two exits is $d$, so if one exit lies on the corner of the polygon, then the other exit also lies on the corner of the polygon. In the following lemmata we show the minimum time required for two robots to exit from the polygon containing two exits.
\begin{lemma}\label{lem:triangleexit}
Consider an equilateral triangle with side length $\sqrt{3}$ with exits situated on two of the vertices of the triangle. In the worst case, it takes at least $\sqrt{3}$ amount of time to exit the disk for both the robots starting from any two arbitrary vertex.
\end{lemma}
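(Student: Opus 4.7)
The plan is to use a pure adversary argument on the placement of the two exits, leveraging the geometric fact that in an equilateral triangle of side $\sqrt{3}$ the Euclidean distance between any two distinct vertices equals $\sqrt{3}$. Since the robots move with unit speed through the plane, a robot starting at one vertex needs time at least $\sqrt{3}$ to reach any other vertex, regardless of whether it travels along an edge or cuts through the interior. The worst-case evacuation time is the maximum over adversarial exit placements of the time when the \emph{later} robot exits, so it suffices to exhibit, for each initial configuration of robots at vertices, an exit placement that forces some robot to traverse a full side.

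First I would split the analysis into two cases according to whether the two robots start at the same vertex or at two distinct vertices. In the first case, both robots sit at a common vertex $V$; the adversary places the two exits at the remaining two vertices. Since neither robot is initially at an exit and the only exits are at distance $\sqrt{3}$ from $V$, the last robot to evacuate needs time at least $\sqrt{3}$. In the second case, say the robots occupy $V_1$ and $V_2$; the adversary places one exit at the third vertex $V_3$ and the other at $V_1$ (or symmetrically at $V_2$). Then the robot at $V_2$ is not at any exit, so it must reach either $V_1$ or $V_3$, each at distance $\sqrt{3}$, and the evacuation time is again at least $\sqrt{3}$.

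Combining the two cases yields the claimed bound. The argument is essentially routine; the only subtlety worth flagging is that the bound must hold against \emph{deterministic} strategies only, so the adversary is free to pick the exit placement after observing the starting vertices, and the robots have no way to obtain any information before one of them reaches a vertex. Since in either configuration the adversary's chosen placement leaves at least one robot at Euclidean distance $\sqrt{3}$ from every exit, no algorithm can achieve a better worst-case evacuation time than $\sqrt{3}$.
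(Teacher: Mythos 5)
Your proof is correct and takes essentially the same approach as the paper: since only two of the three vertices carry exits, the adversary can ensure some robot starts at the exit-free vertex and must then travel the full side length $\sqrt{3}$ to reach either exit. Your explicit case split (coincident versus distinct starting vertices) and the observation that cutting through the interior cannot beat the straight-line distance between vertices merely spell out what the paper's terser proof leaves implicit.
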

\begin{proof}
There are two exits to be placed on three vertices. If one of them does not have the exit, then the other two have the exits. So the robot on the vertex without exit moves to any of the other two vertices with exit in $\sqrt{3}$ amount of time.
\end{proof}
\begin{lemma}\label{lem:traingleexit2}
For $d>2\pi/3$, if there are two points $d$ distance apart and one exit lies on one of the points, then it takes at least $\sin(d)$ amount of time in the worst case.
\end{lemma}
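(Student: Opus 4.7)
The plan is to carry out a straightforward adversary argument, in the spirit of Lemma~\ref{lem:triangleexit}, together with a single trigonometric inequality. Let $P_1$ and $P_2$ denote the two points on the perimeter at arc distance $d$, and suppose exactly one of them is the exit, the choice being hidden from the robot.

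First, I would observe that since the algorithm is deterministic and the exit is realized at a single point on the boundary, the robot gains no information about which candidate is the exit until it actually arrives at one of them. This lets the adversary defer committing the exit placement until after the robot's trajectory has revealed which candidate it is visiting first. Whichever candidate is visited first---say $P_1$---the adversary declares $P_2$ to be the exit, so the robot must travel from $P_1$ to $P_2$ before evacuating.

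Next, the shortest such traversal has length at least the Euclidean chord $|P_1P_2| = 2\sin(d/2)$. Using the double-angle identity $\sin(d) = 2\sin(d/2)\cos(d/2)$ and the bound $\cos(d/2) \leq 1$, I would conclude $2\sin(d/2) \geq \sin(d)$, which yields the claimed lower bound.

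The main subtlety to handle carefully is the information-leak question, i.e., whether the robot could deduce the exit's location without actually touching a candidate. Since the exit is a measure-zero point on the perimeter and the robot has only face-to-face communication, no such leak is possible before the first candidate visit, and the adversary's deferred commitment is justified. The hypothesis $d > 2\pi/3$ simply selects the regime where the underlying ``polygon'' from the preamble has only two vertices, so that the triangle-based argument of Lemma~\ref{lem:triangleexit} no longer applies and this two-vertex bound is the appropriate replacement.
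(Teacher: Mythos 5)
Your argument is internally consistent and does produce the stated inequality, but it analyzes the wrong configuration, and the difference is not cosmetic. You model the situation as a \emph{single} exit hidden at one of the two given points $P_1,P_2$ and force the robot to traverse the chord $\overline{P_1P_2}=2\sin(d/2)$. The paper's scenario keeps both exits in play: the robots sit at two points $A$ and $B$ at arc distance $d$, one exit is guaranteed at (say) $B$, and because the two exits are exactly $d$ apart the second exit is constrained to lie either at $A$ or at the third point $C$ with arc distance $d$ from $B$ on the far side. The adversary places it at $C$, and the robot at $A$ then travels to the nearer of $B$ and $C$; for $d>2\pi/3$ that is $C$, at chord distance $2\sin\left((2\pi-2d)/2\right)=2\sin(\pi-d)=2\sin(d)$, which is where the lemma's $\sin(d)$ comes from (the paper's own computation actually yields $2\sin(d)$, of which the stated $\sin(d)$ is a weakening). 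Your argument suppresses the second exit entirely, and that second exit is precisely what breaks your adversary: a robot that comes up empty at $P_1$ is never forced to cross to $P_2$, because for $d>2\pi/3$ the companion exit at $C$ is strictly closer ($2\sin(d)<2\sin(d/2)$). So $2\sin(d/2)$ is not a lower bound on the forced travel in the two-exit problem, and deriving $\sin(d)$ from it proves the right number from an invalid scenario.

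The numbers flag the problem on their own: your bound $2\sin(d/2)$ is increasing on $(2\pi/3,\pi]$ and exceeds $\sqrt{3}$ throughout, whereas the quantity the lemma and the subsequent theorem track, $\sin(d)$ and $1+\sin(d)$, decreases to $0$ and $1$ respectively as $d\to\pi$; if your configuration were legitimate it would establish a bound strictly stronger than the authors claim, exactly because it hides the shortcut through $C$. Relatedly, your reading of the hypothesis $d>2\pi/3$ (``the polygon degenerates to two vertices'') is not the operative reason; the hypothesis is what makes $C$ rather than $B$ the nearest alternative exit to $A$. The repair is to reinstate the second exit: with one exit pinned at $B$, the adversary chooses between the exit pairs $\{A,B\}$ and $\{B,C\}$, and in the latter case the robot at $A$ needs at least $\min(\overline{AB},\overline{AC})=2\sin(d)\geq\sin(d)$.
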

\begin{figure}[H]\centering
\includegraphics[height=0.3\linewidth]{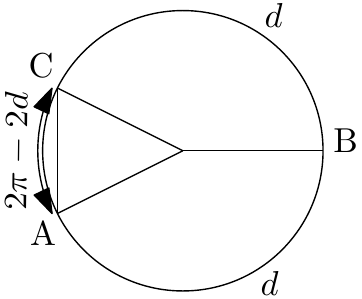}
\caption{The robots start from $A$ and $B$}\label{fig:triangle2}
\end{figure}\vspace{-2em}
\begin{proof}
Consider Fig.~\ref{fig:triangle2}. Suppose the two robots start from the points $A$ and $B$ on the circle. One of those points definitely contains an exit, say that is point $B$. Then the other exit would be at either $A$ or $C$. Then the robot at $A$ has to go to $C$, which is the closest exit to evacuate from the disk. Now $\overline{AC} = 2\sin((2\pi -2d)/2) = 2\sin(d)$.
\end{proof}

\begin{lemma}\label{lem:polygonexit}
Starting from any vertices of the following polygons with corresponding $d$ values, the robots require at least $2$ amount of to exit the disk in the worst case.
\begin{enumerate}
\item Square with side length $\sqrt{2}$, where $d = \pi/2$
\item Pentagon with side length $2 \sin(\pi/5)$, where $d=2\pi/5$
\item Hexagon with side length $1$, where $d=\pi/3$
\end{enumerate}
\end{lemma}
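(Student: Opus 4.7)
The plan is to extend the adversarial argument used in Lemmas~\ref{lem:triangleexit} and~\ref{lem:traingleexit2} to each of the three polygons in turn. In each case I would view the polygon's vertices as the set of candidate exit locations; since each polygon's side length equals the chord subtended by an arc of length $d$, any placement of the two exits at two consecutive polygon vertices is consistent with $\widearc{E_1E_2}=d$, and conversely any placement with one exit at a polygon vertex forces the other exit to be at an adjacent polygon vertex.

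First I would use the rotational symmetry of each regular polygon to fix the first robot's starting vertex at $V_0$ and enumerate the essentially distinct positions of the second robot: two cases for the square, two for the pentagon, and three for the hexagon (including the diametric pair). For each such configuration I would list all rotationally distinct placements of the exit pair and, using the chord formula $2\sin(\theta/2)$, tabulate the distances from each robot's starting vertex to each candidate exit. Mirroring the pattern of Lemma~\ref{lem:triangleexit}, the adversary then selects, among the placements consistent with the robots' exploration history, the one that maximizes the minimum distance the slower robot must still travel to reach an exit.

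Second, in the square ($d=\pi/2$) and in the hexagon ($d=\pi/3$) the argument concludes via the observation that each of these polygons contains a pair of vertices at chord distance exactly $2$, namely the diagonal of the square and the diametric pair of the hexagon. I would show that in both cases the adversary can always choose an exit placement so that at least one robot, regardless of which candidate exit it prefers, is forced to traverse the full diameter of the disk, giving the claimed bound $2$ immediately.

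The main obstacle is the pentagon ($d=2\pi/5$), because the pentagon has no pair of vertices at chord distance $2$: its longest chord is $2\sin(2\pi/5)<2$. Hence the bound cannot come from a single chord hop, and I expect to have to combine an arc-traversal cost---the adversary can delay committing to the exit placement until a robot has traversed at least an arc of length $2\pi/5$ along the perimeter, in the spirit of Lemma~\ref{lem:traingleexit2}---with the subsequent chord traversal to the far exit, and then argue that the sum is at least $2$ in the worst case. Verifying that this combined arc-plus-chord estimate tightly reaches $2$ for the pentagon, while still being consistent with the adaptive nature of the adversary in the face-to-face model, is the step where I expect the technical core of the proof to lie.
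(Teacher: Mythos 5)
Your approach diverges from the paper's and, more importantly, misses the single idea that actually produces the bound of $2$ uniformly for all three polygons. The paper's proof does not use any chord of length $2$: it observes that after the robots' initial placement there are three candidate exit vertices (the two occupied vertices plus the vertex $F$ that must contain an exit if neither occupied vertex does), that in the face-to-face model the robots must rendezvous to resolve which case holds, and that the rendezvous point minimizing the worst case over these candidates is the point equidistant from all three, namely the center of the disk. This gives $1$ (vertex to center) plus $1$ (center to the confirmed exit vertex) $=2$, identically for the square, pentagon, and hexagon. Your plan instead tries to realize the bound as a single geometric distance, which is exactly why the pentagon breaks it: as you note, its longest chord is $2\sin(2\pi/5)<2$, so no exit placement forces a robot to cover distance $2$ in one hop. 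That same observation should have warned you that the diameter is not the right mechanism for the square and hexagon either --- the bound is the same $2$ for the pentagon, where no diametric vertex pair exists.

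The patch you propose for the pentagon does not close the gap. The arc-traversal cost of Lemma~\ref{lem:traingleexit2} arises when exits may lie anywhere on the perimeter and a robot must sweep an arc to detect them; in the present lemma the adversary has already restricted the exits to known, discrete polygon vertices, so an optimal robot never walks along the perimeter --- it moves along straight segments between candidate vertices or to a meeting point, and nothing forces it to first pay an arc of length $2\pi/5$. Without that term your budget for the pentagon is a combination of chords of length $2\sin(\pi/5)\approx 1.176$ or $2\sin(2\pi/5)\approx 1.902$, and you have no argument pushing the worst case past $2$ against robots that simply head for the center. Your square/hexagon cases are also incomplete as stated: a robot antipodal to the last surviving candidate can often reach a \emph{nearer} consistent candidate first, so ``forced to traverse the full diameter'' requires the adversary to have eliminated every closer candidate, which your enumeration does not establish. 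The remedy is to adopt the paper's rendezvous-at-the-center argument, which handles all three polygons at once.
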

\begin{proof}
\begin{figure}[H]\centering
\includegraphics[height=0.3\linewidth]{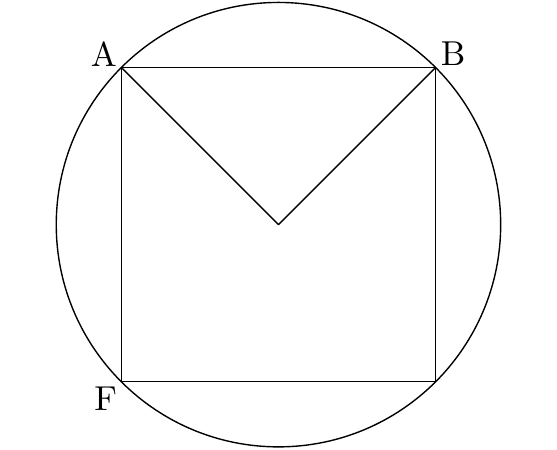}
\includegraphics[height=0.3\linewidth]{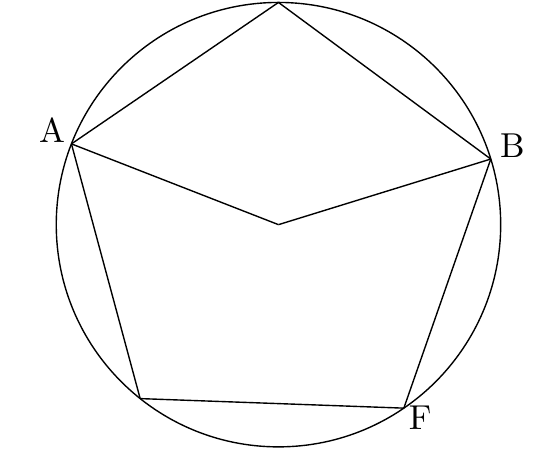}\\
\includegraphics[height=0.3\linewidth]{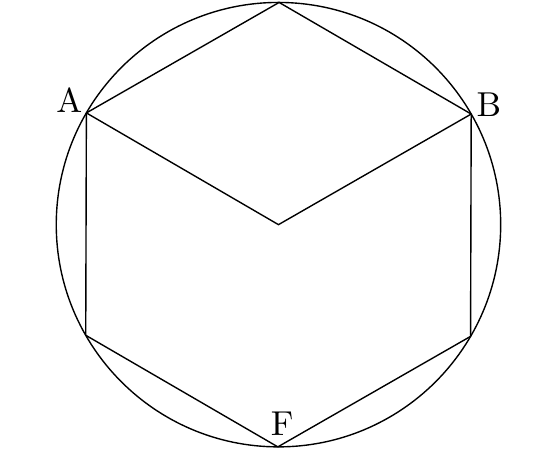}
\caption{Polygons with side length $2\sin(d/2)$, where $d = \pi/2$, $2\pi/5$ and  $\pi/3$}\label{fig:polygon}
\end{figure}

The two robots can start from any of the vertices. These polygons were constructed  such that the sides of polygons corresponds to $d$ in terms of arc length. So if one of the vertices contain an exit, then one of its neighbor also contains a vertex, i.e., two consecutive vertices contain the exits. 
Examine the scenario shown in the Fig.\ref{fig:polygon}. After the two robots meet there can be two cases,
\begin{enumerate}
\item At least one of the two vertices contain an exit.
\item None of the two vertices contains any exit.
\end{enumerate}
If one of the two vertices contain an exit, then they can exit via that exit. If none of the two vertices contain any exit, then the robots can exit via vertex F, marked in the Fig.~\ref{fig:polygon}, which definitely contains an exit.
Since we want to minimize the time required to exit the disk, the meeting point should be equidistant from the three probable exit points, which are the vertices of the polygon. So center of the disk is chosen as the meeting point. Hence the total amount of time required to exit the disk is travelling from vertex to center, i.e., 1  and again center to exit vertex, i.e., 1. Hence 2 unit of time is required in total.
\end{proof}

\begin{theorem}
The lower bound of worst-case evacuation time for two robots with face-to-face communication starting at the center of a unit disk having two exits on its perimeter at a distance $d$ over the arc is the following.
\begin{enumerate}
\item $1+\sin(d)$ for $\pi\geq d > 2\pi/3$
\item $1+\sqrt{3}$ for $2\pi/3 \geq d > \pi/2$
\item $3$ for $0 < d\leq \pi/2$
\end{enumerate}
\end{theorem}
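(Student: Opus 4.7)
I would prove each range of $d$ separately via an adversarial argument against a deterministic algorithm $\mathcal{D}$. In every case the common ``$+1$'' accounts for the unit radius: the two exits lie on the perimeter of the unit disk while the robots start at $O$, so at least one time unit must elapse before any robot can reach the perimeter. The remaining contribution comes from applying the corresponding lemma established just above.

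For Case~3 ($0<d\le\pi/2$) I would have the adversary inscribe on the perimeter a regular polygon whose side equals the chord $2\sin(d/2)$, so that adjacent vertices are at arc distance $d$. The adversary delays committing to which consecutive pair of vertices holds the two exits until the robots have reached the perimeter, and Lemma~\ref{lem:polygonexit} then supplies the additional $2$ units needed (the center is equidistant from all candidate vertices, so the worst meeting-then-exit strategy costs $1+1$ after arrival), for a total of $3$. For Case~2 ($\pi/2<d\le 2\pi/3$) I would apply Lemma~\ref{lem:triangleexit}: the adversary inscribes an equilateral triangle of side $\sqrt{3}$ and reserves the right to place the two exits on any two of its three vertices, so the lemma forces additional travel of at least $\sqrt{3}$, yielding $1+\sqrt{3}$. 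For Case~1 ($2\pi/3<d\le\pi$) I would apply Lemma~\ref{lem:traingleexit2}: once the first robot touches the perimeter at some point $A$, the adversary designates one exit at $A$ and places the second exit at whichever of the two candidate points at arc distance $d$ from $A$ is farther from the current position of the second robot, forcing an additional $\sin(d)$ units of travel after arrival.

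The main technical hurdle is the deferred-commitment aspect of the adversary. One must argue that since $\mathcal{D}$ is deterministic and has no prior information about the exit locations beyond the arc distance $d$, the adversary may legitimately postpone choosing exit positions until after it has observed enough of the trajectory to select the worst-case configuration that is still consistent with the input. Once this is formalized (for instance by the standard online-algorithms argument that any pair of points on the perimeter at arc distance $d$ remains an admissible placement until actually visited by a robot), each case reduces to the geometric calculation already carried out in the preceding lemmas, and the rest is bookkeeping.
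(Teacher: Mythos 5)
Your proposal follows essentially the same route as the paper's own proof: split into the three ranges of $d$, invoke Lemma~\ref{lem:polygonexit}, Lemma~\ref{lem:triangleexit}, and Lemma~\ref{lem:traingleexit2} respectively for the extra $2$, $\sqrt{3}$, and $\sin(d)$, and add the unit of time needed to reach the perimeter from the center. Your explicit discussion of the adversary's deferred commitment makes precise something the paper leaves implicit, but note that both your argument and the paper's share the same informal step for $d$ values where the regular polygon does not exactly close up (the paper covers this only with the remark that intermediate $d$ cannot do better than the nearby regular-polygon cases).
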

\begin{proof}
From Lemma~\ref{lem:triangleexit}, the robots need at least $\sqrt{3}$ starting from any vertex of a triangle. Traveling to any vertex of the triangle from the center takes 1 amount of time. Hence the total time required to exit the disk is $1+\sqrt{3}$.
Similarly from Lemma~\ref{lem:polygonexit}, we can say that the minimum time needed to exit the disk is $3$, i.e., $2$ for exiting from the polygon, and $1$ to reach the perimeter from the center. 
Now, for the $d$ values which lie in between, the exit time cannot be less than the exit times for regular polygons.
Similarly from Lemma~\ref{lem:traingleexit2}, for $d> 2\pi/3$ it takes $1+\sin(d)$, where 1 being the distance to travel from center to perimeter of disk.
 Hence the theorem follows.
\end{proof}

\section{Simulation Results}\label{sec:simumation}
\subsection{Wireless Communication}
In the simulation part, we have found out the worst case evacuation times for all $d \in [0,\pi]$. We have done the simulation for three values of $\zeta$, namely, $\zeta = 0$, $\zeta = d/2$ and $\zeta = d$. Fig.~\ref{fig:wlcomparision} plots the worst case time for evacuation versus the distance between two exits, $d$. We have done the simulation for $d$ values with a 0.01 step size and the corresponding position of exits with a 0.001 step size. In this we have considered all possible position of exits and found the maximum as the worst case time for evacuation. 
It can be observed from Fig.~\ref{fig:wlcomparision} that the worst case evacuation time is less if the two robots start at a distance $d$ apart initially. Since this is a search problem, reducing the search space can be an effective method to reduce the worst case evacuation time. As we have two exits and we know the distance between them, we can easily remove an arc length equal to $d$ from the perimeter because two exits cannot lie within a $d$ distance  arc.

From Fig.~\ref{fig:wlcomparision}, it is clear that $\zeta = d$ performs better compared to $\zeta=0$ and $\zeta = d/2$. Even, $\zeta =0$ performs better than $\zeta = d/2$ for $d > 1.21$. 
It can be observed from the figure that, it is not strictly monotonous for $\zeta = 0$ or $\zeta = d$. The reason for this is that there is a transition between cases when there is a local minima or local maxima is present.

As per Fig.~\ref{fig:wlcomparision}, for $\zeta = 0$, it is monotonous until $2\pi/3$. For $d \leq 2\pi/3$, it follows the case $d < x$, i.e., one of the probable exits positions is already explored. As this continues the time for evacuation decreases for the same situation as the value of $d$ is increasing. But for $d> 2.09$, it changes to a case where both probable exits are unexplored, hence the time for evacuation increases. Also the plot is curved since the worst case continuously switches between going to the known exit and visiting the probable exits for $d > 2\pi/3$ for $\zeta =0$.

As per Fig.~\ref{fig:wlcomparision}, for $\zeta = d$, observe that there is a local minima  at $ d = 2\pi/3$. The  minima marks the end of the case where $d < x$, that is both exit positions are known. For $0.93 < d \leq 2\pi/3$, both the probable exits are unexplored and so the evacuation time increases and then decreases. 
In the range, where the evacuation time increases until it reaches the local maxima at $d = 1.385$, the unexplored exit $E_1'$ lies on $\widearc{BC}$ very close to $B$.
For $2\pi/3 < d \leq 2.69$, the evacuation time monotonically decreases and the corresponding case has one probable exit explored and the other exit lies in the skipped area. For $d > 2.69$, both the probable exits are unexplored and the worst case situation switches between going towards the known exit and going towards the two probable exits.
\begin{figure}[H]
\centering
\includegraphics[width=0.8\linewidth]{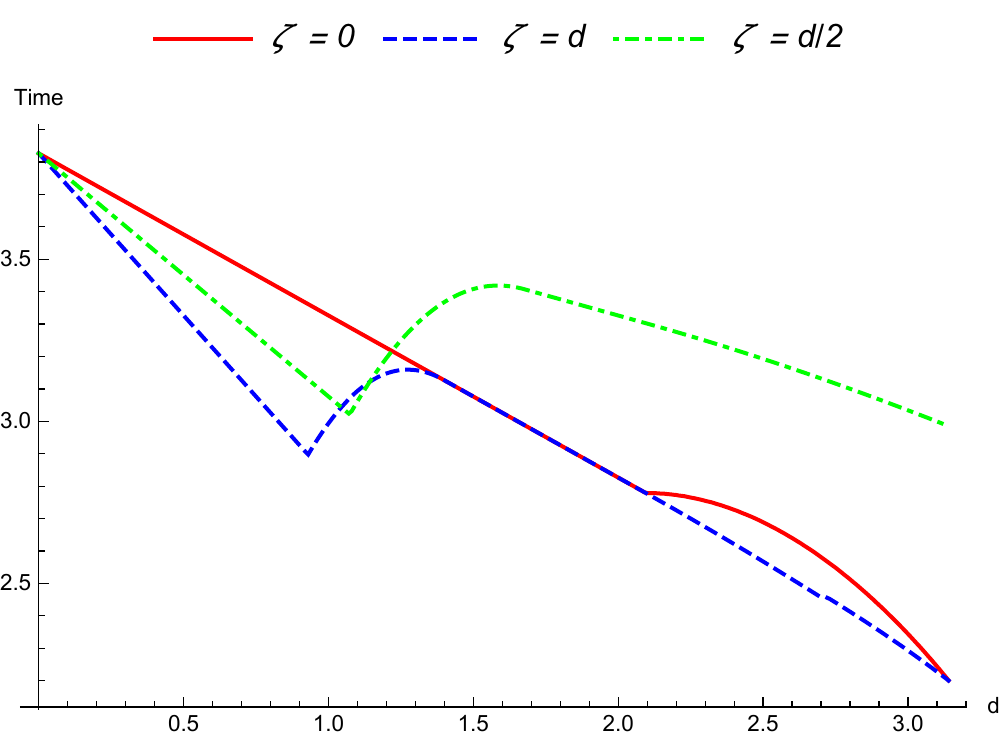}
\caption{Comparision of Algorithms  for different $\zeta$ values for wireless communication with unlabeled exit locations}
\label{fig:wlcomparision}
\end{figure}\vspace{-2em}
\begin{figure}[H]
\centering
\includegraphics[width=0.9\linewidth]{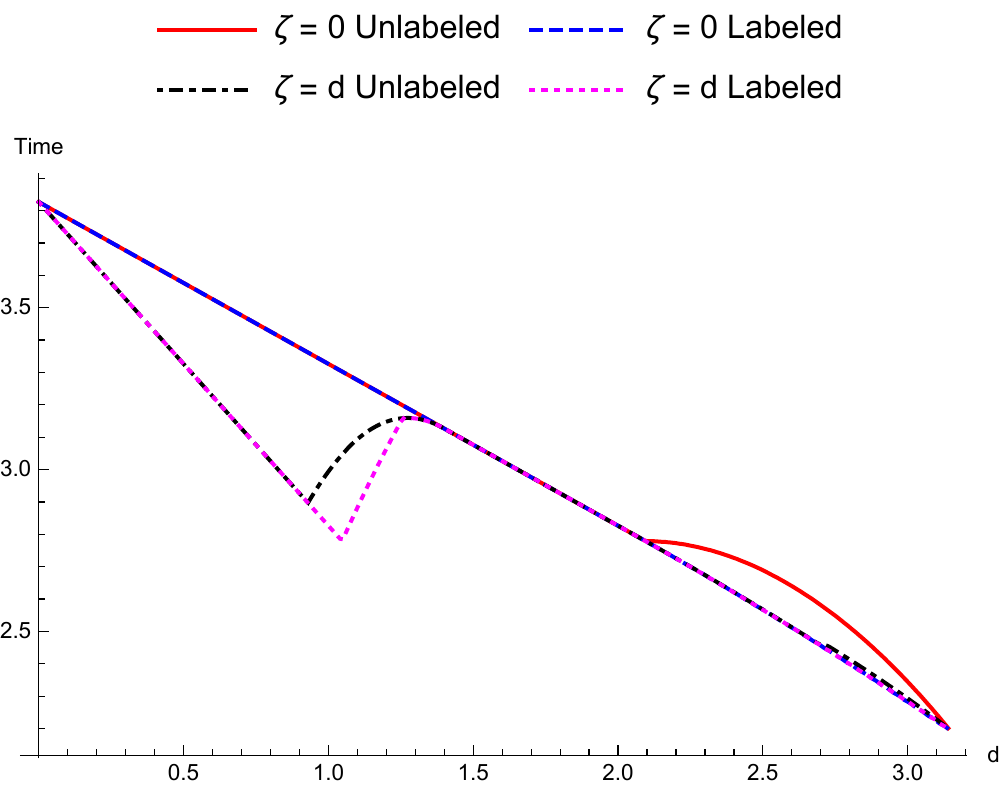}
\caption{Comparision between evacuation time of Unlabeled and Labeled exit locations  for $\zeta =0$}\label{fig:wlcomplower1}
\end{figure}

In Fig.~\ref{fig:wlcomplower1}, we plot the evacuation time for labeled exit location and unlabeled exit locations for $\zeta =0$ and $\zeta =d$. It can be easily observed that the labeled exit time is strictly monotonic for $\zeta =0$. For $\zeta =d$, the evacuation time for labeled exits falls closely with unlabeled exits.

In Fig.~\ref{fig:wlcomplower3}, we plot the evacuation time for labeled exit locations for various values of $\zeta$ such as $\{0.1d, 0.2d,\cdots, d\}$. Then we take the minimum of all these for particular $d$ values and we obtain the $\zeta$ values for which we get the least evacuation time.
Table~\ref{tab:minval} lists the minimum evacuation time for particular $\zeta$ values in wireless communication model. 
\begin{figure}[h]
\centering
\includegraphics[width=0.9\linewidth]{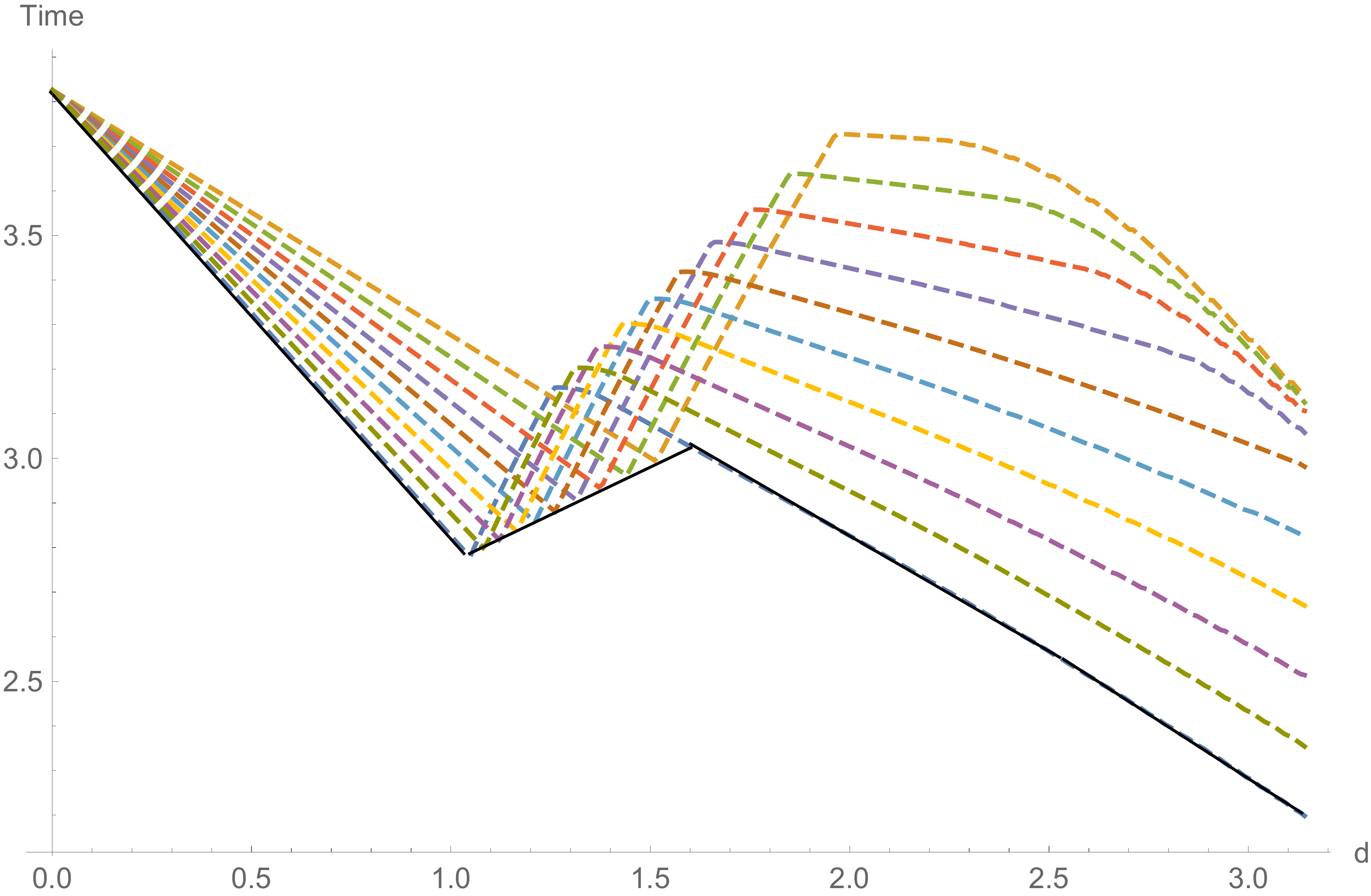}
\caption{Comparision between evacuation time of Labeled exit locations  for different $\zeta$ values in $\{0.1d, 0.2d,\cdots,d\}$ from top to bottom in increasing order}\label{fig:wlcomplower3}
\end{figure}
\begin{table}[H]
\centering
\begin{tabular}{|c|c|c|c|}
\hline
$\zeta$ & \textbf{Minimum Time} & $d$ &\textbf{ Exits} \\\hline
0 & $\pi/4 + \sqrt{2} \approx 2.1996$ & $\pi$ & Unlabeled\\\hline
$d$ & $\pi/4 + \sqrt{2} $ & $\pi$ & Unlabeled\\\hline
$d/2$ & $\pi/2 + \sqrt{2} \approx 2.985$ & $\pi$ & Unlabeled\\\hline
0 & $\pi/4 + \sqrt{2}$ & $\pi$ & Labeled\\\hline
$d$ & $\pi/4 + \sqrt{2}$ & $\pi$ & Labeled\\\hline
$d/2$ & 2.88 & 1.26 & Labeled\\\hline
\end{tabular}
\caption{Minimum evacuation time for different values of $\zeta$ for Labeled and Unlabeled Exits in wireless communication model}\label{tab:minval}
\end{table}

\subsection{Face-to-Face Communication}
In this Section, we compare the algorithms for $\zeta = 0$, in Section~\ref{sec:samepoint} and $\zeta = d$, in Section~\ref{sec:diferentpoint}. The simulation is conducted for $d \in [0,\pi]]$. For the simulation, we have considered discrete points with interval 0.01 for $d$. Then for each $d$, we have taken the location of exits with interval 0.001 in $[0,2\pi]$. For each position of exit, we calculated the time required to exit the disk.
We calculated the solution to the equations using bisection method with error threshold $10^{-6}$.
Then we took the maximum over all those values.
In Fig.~\ref{fig:comparision}, we compare the worst case evacuation time for $\zeta = 0$ and $\zeta = d$ for both labeled and unlabeled exit locations . 
We can observe that, for unlabeled exits, $\zeta = d$ performs better compared to $\zeta = 0$ for all $d$ except $d \in (1.895, 2.005)$ and $d > 2.765$. 

\begin{figure}[H]
\centering
\includegraphics[width=\linewidth]{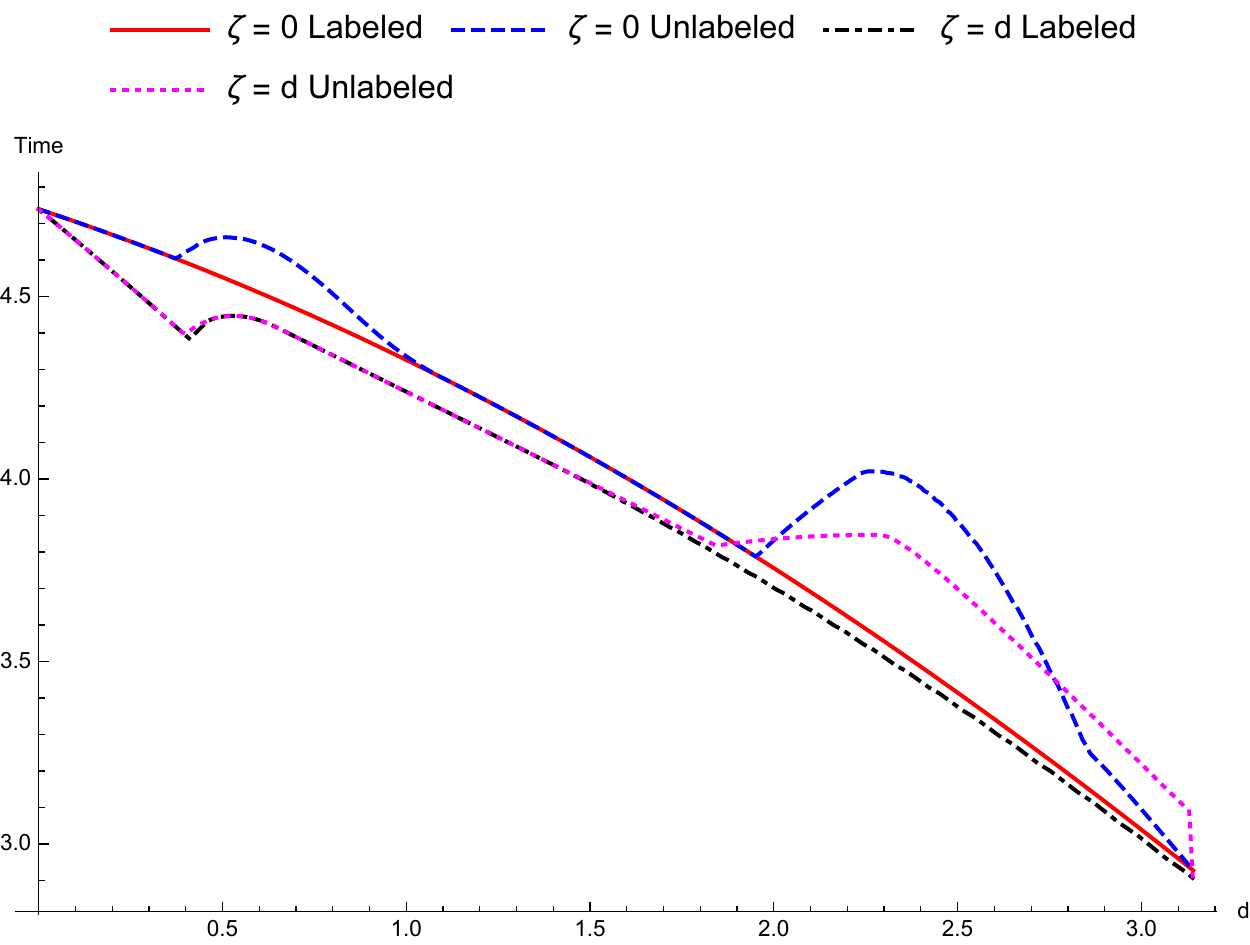}
\caption{Comparision of Algorithms for $\zeta =0$ and $\zeta = d$ in face-to-face communication for Labeled and Unlabeled exits}
\label{fig:comparision}
\end{figure}

It can be observed that the curves are not monotonic for both the algorithms. The change in monotonicity can be attributed to the change in the worst case situations. 

For $\zeta = 0$, the change in monotonic behavior happens as following. At $d = 0.38$, \textit{Case 4a} (ref. Fig.~\ref{fig:samepointcase4a}) to \textit{Case 3a} (ref. Fig.~\ref{fig:samepointcase3cb}). At $d =1.11$ \textit{Case 3a} to \textit{Case 2a} (ref. Fig.~\ref{fig:samepointcase2a}). At $ d = 1.95$ \textit{Case 2a} to \textit{Case 3a} (ref. Fig.~\ref{fig:samepointcase3cb}). At $ 2.52 < d < 2.85$ it switches between \textit{Case 3a} (ref. Fig.~\ref{fig:samepointcase3ca}) and \textit{Case 3a} (ref. Fig.~\ref{fig:samepointcase3cb}). For $ d> 2.85$ it is \textit{Case 1} where the robots move to the two probable exits $E_1'$ and $E_2'$.

For $\zeta = d$, there is a local minima at $d = 0.4$, where the situation changes from \textit{Case 2c} (ref. Fig.~\ref{fig:diffpointcase2c}) to \textit{Case 1a} (ref. Fig.~\ref{fig:diffpointcase1a}). The evacuation time increases, since the robots have knowledge of only one exit instead of two. Then at $d = 1.84$, again the situation changes from \textit{Case 1a} to \textit{Case 1b} (ref. Fig.~\ref{fig:diffpointcase1b}). It continues with \textit{Case 1b} for $1.84 < d < \pi$. At $d = \pi$, there is a sharp drop, since the robots can exactly determine the position of other exit.

As expected, for $\zeta = 0$, evacuation time for labeled exits are strictly monotonic with all the worst case situation occur only in \textit{Case 3} $E_2'$ lies on $\widearc{MX}$. For $\zeta = d$ in case of labeled exits, it is non-monotonic at $d = 0.42$, where it switches from \textit{Case 3} to \textit{Case 1} and it continues with \textit{Case 1} until $d = \pi$.

\section{Conclusion}\label{sec:conclusion}
In this paper, we address the evacuation problem for two robots in a unit disk with two exits located at arbitrary points on the perimeter.
We have considered both unlabeled and labeled exits in the wireless and face-to-face communication models.
As per our findings from the simulation, the evacuation time for labeled exits is always smaller than unlabeled exits for same value of $\zeta$.

The minimum evacuation time for a particular value of $d$ is achieved for a particular value of $\zeta$ in the wireless communication model as per Fig.~\ref{fig:wlcomplower3}. It is still open to find out a function which describes the relation between $\zeta$ and $d$ in the same model. We think that this is very close to the optimal solution for wireless communication model. The previous papers consider only a single exit with $\zeta =0$, and the evacuation time for our algorithm is the same for $d=0$ in the same model.

 The evacuation time for our algorithms in face-to-face communication model are not close to the optimal value. For unlabeled exits, both robots are required to meet for an agreement, even if both have encountered exits, which in turn increases the time traveled. Improvements to our proposed algorithms are possible by using detours similar to~\cite{CzyzowiczGKNOV15}. Although our work for two exits is not comparable, but for $d=0$, i.e., both exits coincide, evacuation time is same as \cite{CzyzowiczGGKMP14}, which is less than \cite{CzyzowiczGKNOV15}.
For the lower bounds, we consider only some specific set of points as probable exit positions, this reduces the search space by a lot. In conclusion, there is scope to improve the bounds for face-to-face communication model.
\bibliographystyle{ACM-Reference-Format}
\bibliography{bib}

\end{document}